\begin{document}
\newtheorem{Proposition}{Proposition}[section]

\title{Impact of Clifford operations on non-stabilizing power and quantum chaos}

\author{Naga Dileep Varikuti}
\email{dileep.varikuti@unitn.it}

\author{Soumik Bandyopadhyay}
\email{sbandyop@ictp.it}
\thanks{Present address: The Abdus Salam International Center for Theoretical Physics, Strada Costiera 11, 34151 Trieste, Italy}

\author{Philipp Hauke}
\email{philipp.hauke@unitn.it}

\affiliation{Pitaevskii BEC Center, CNR-INO and Dipartimento di Fisica, Universit\`a di Trento, Via Sommarive 14, Trento, I-38123, Italy}	
\affiliation{INFN-TIFPA, Trento Institute for Fundamental Physics and Applications, Via Sommarive 14, Trento, I-38123, Italy}

\newtheorem*{theorem*}{Theorem}
\setstcolor{red}

\begin{abstract}

Non-stabilizerness, alongside entanglement, is a crucial ingredient for fault-tolerant quantum computation and achieving a genuine quantum advantage. Despite recent progress, a complete understanding of the generation and thermalization of non-stabilizerness in circuits that mix Clifford and non-Clifford operations remains elusive. While Clifford operations do not generate non-stabilizerness, their interplay with non-Clifford gates can strongly impact the overall non-stabilizing dynamics of generic quantum circuits. In this work, we establish a direct relationship between the final non-stabilizing power and the individual powers of the non-Clifford gates, in circuits where these gates are interspersed with random Clifford operations. By leveraging this result, we unveil the thermalization of non-stabilizing power to its Haar-averaged value in generic circuits. As a precursor, we analyze two-qubit gates and illustrate this thermalization in analytically tractable systems. Extending this, we explore the operator-space non-stabilizing power and demonstrate its behavior in physical models. Finally, we examine the role of non-stabilizing power in the emergence of quantum chaos in brick-wall quantum circuits. 
Our work elucidates how non-stabilizing dynamics evolve and thermalize in quantum circuits and thus contributes to a better understanding of quantum computational resources and of their role in quantum chaos.

\end{abstract}

\maketitle

\newtheorem{theorem}{Theorem}[section]
\newtheorem{corollary}{Corollary}[theorem]
\newtheorem{lemma}[theorem]{Lemma}
\def\endproof{\hfill$\blacksquare$}

\section{Introduction}

Preparing resourceful states and operators with non-classical correlations is essential for fault-tolerant quantum computation and achieving a true quantum advantage over classical methods \cite{aspuru2005simulated, RevModPhys.81.865, datta2008quantum, campbell2017roads, preskill2018quantum, RevModPhys.91.025001}. 
Entanglement has been a primary quantum resource with applications ranging from quantum metrology \cite{giovannetti2004quantum, giovannetti2006quantum, RevModPhys.89.035002, hauke2016measuring, RevModPhys.90.035005}, quantum teleportation \cite{bennett1993teleporting}, and quantum error correction \cite{shor1995scheme, RevModPhys.87.307, roffe2019quantum} to quantum optimization algorithms \cite{hauke2015probing, PhysRevA.111.022434, PhysRevA.109.012413, chen2022much}. While an important asset in various quantum protocols, entanglement alone does not fully determine the computational power of quantum systems. In fact, stabilizer states generated by Clifford circuits---even if highly entangled---remain classically simulable \cite{gottesman1998heisenberg, aaronson2004improved}. Thus, the presence of non-stabilizerness, also called "magic," along with entanglement, is essential for universal fault-tolerant quantum computation \cite{bravyi2005universal, bravyi2012magic}. This insight motivated an avalanche of studies on the emergence of non-stabilizer dynamics in many-body quantum systems \cite{zhou2020single, liu2022many, niroula2024phase, catalano2024magic, bu2024complexity, campbell2011catalysis, turkeshi2023measuring, haug2023quantifying, PhysRevLett.131.180401, tarabunga2024nonstabilizerness, tarabunga2024magic, turkeshi2025magic, lopez2024exact, odavic2024stabilizer, hou2025stabilizer, andreadakis2025exact, dowling2025bridging, robin2025stabilizer, robin2024magic, tirrito2024anticoncentration, PRXQuantum.5.030332}. While non-stabilizerness is key to quantum advantage, a complete understanding of how it builds up and thermalizes \footnote[1]{As is common in this context \cite{jonnadula2020entanglement, d2014long}, we use the term \text{thermalization} to refer to the equilibration of non-stabilizing power to its Haar-averaged value.} in a quantum circuit 
remains outstanding.

In this work, we address this question by studying in detail the generation of non-stabilizerness in circuits that mix Clifford and non-Clifford operations. 
Our main theoretical framework is given by the non-stabilizer-generating power of the circuits under study. In general, the resource-generating powers of quantum evolutions provide a state-independent framework for understanding their ability to produce essential quantum resources. In this context, the entangling power of bipartite unitaries has been extensively studied, see, e.g., Refs.~\cite{zanardi2000entangling, nielsen2003quantum, jonnadula2017impact, jonnadula2020entanglement, lakshminarayan2001entangling, pal2018entangling, styliaris2021information, varikuti2022out} and references therein. The entangling power is defined as the average entanglement a unitary generates when acting on a typical product state \cite{zanardi2000entangling}. Similarly, the non-stabilizing power of a unitary, introduced in Ref.~\cite{leone2022stabilizer}, quantifies the average non-stabilizerness produced when acting on a typical stabilizer state. 
When a unitary has low entangling power, the evolution it generates is amenable to efficient classical simulation through the matrix product state framework \cite{verstraete2008matrix, schollwock2011density, perez2006matrix}. Likewise, a small non-stabilizing power of Clifford circuits interspersed with a few non-Clifford elements maintains the classical simulability \cite{PhysRevLett.116.250501, PRXQuantum.6.010337, liu2024classical}. 
Understanding the interplay between these two powers is key to identifying the limits of classical simulability \cite{gu2024magic, viscardi2025interplay, fux2024entanglement, frau2024nonstabilizerness}. Moreover, these two quantities are intimately related to out-of-time ordered correlators, a well-known diagnostic of quantum chaos \cite{zanardi2000entangling, varikuti2022out, shukla2022out, leone2022stabilizer}. Therefore, the mutual influence of entangling and non-stabilizing power is crucial for understanding quantum chaos and the build-up of complexity in many-body systems \cite{santra2025complexitytransitionschaoticquantum, chernyshev2025quantum, brokemeier2025quantum}.  

By definition, Clifford operations do not generate non-stabilizerness when applied to an arbitrary quantum evolution. 
This property motivates us to ask the following crucial question: \textit{How does a random Clifford operation contribute to generating non-stabilizerness when it is interspersed between two arbitrary non-Clifford unitaries?} We address this question with the help of rigorous analytical and numerical results. In particular, we show that the final non-stabilizing power displays an intimate---and functionally rather simple---connection with the individual powers of the non-Clifford unitaries. We then unveil the thermalization of non-stabilizing power to the Haar-averaged value under repeated insertions of random Clifford operations between arbitrary non-Clifford unitaries. Such interlacing of Clifford and non-Clifford gates naturally arises in settings like randomized benchmarking with random Clifford gates~\cite{magesan2012efficient}, highlighting the broader relevance of our findings to practical quantum protocols.
Further, we emphasize the interplay of the entangling and non-stabilizing powers in generating quantum chaos in quantum circuits. We do so by studying quantum chaos in brick wall quantum circuits where the two-qubit gates display smooth variation of entangling and non-stabilizing powers with interaction strength. Our findings highlight that quantum chaos emerges from the interplay of these quantities rather than from either quantity alone.

This work is structured as follows. In Sec.~\ref{sec-2}, we briefly review the definitions of entangling and non-stabilizing powers and summarize the main results of this work. We then detail the effect of random Clifford operations on the non-stabilizing power in Sec.~\ref{sec-impact}. Section~\ref{sec-impact-a} explores the thermalization of the non-stabilizing power and characterizes its associated fluctuations. In Sec.~\ref{sec-impact-b}, we introduce the operator-space non-stabilizing power and study its behavior with the help of chaotic and integrable Ising models. Then, in Sec.~\ref{sec-quant-chaos}, we construct minimally random brick-wall Floquet circuits and investigate the emergence of quantum chaos. Finally, we conclude this work in Sec.~\ref{sec-conclusion}. 
Technical details and supporting results are delegated to several appendices. 

\section{Background and Summary of Main Results}
\label{sec-2}
In this section, we briefly outline the main quantities of interest---the entangling power and the non-stabilizing power of quantum evolutions. While the former is defined for bipartite evolutions, the latter is defined for arbitrary quantum systems. To better guide the reader and to provide an overview of the contents of the article, we also summarize our main results.  

\subsection{Entangling power}
The entangling power of a bipartite unitary quantum evolution is defined as the average entanglement it generates when acting on typical product states \cite{zanardi2000entangling, nielsen2003quantum}. Consider a bipartite Hilbert space $\mathcal{H}_A \otimes \mathcal{H}_B$ with respective dimensions $d_A$ and $d_B$, and let $|\psi\rangle = |\phi_A\rangle \otimes |\phi_B\rangle$ be a product state in it. Then, for a bipartite unitary operator $U$ acting on $\mathcal{H}_A \otimes \mathcal{H}_B$, the entanglement generated in the state $U|\psi\rangle$ can be quantified by the purity of the reduced density matrix, $\rho_B = \text{Tr}_A\left(U|\psi\rangle\langle\psi|U^\dagger\right)$ \cite{zanardi2000entangling},
\begin{align}
\label{eq:entanglementpurity}
\mathcal{E}(U|\psi\rangle)=& 1-\text{Tr}_{B}(\rho^2_{B})\nonumber\\
=& 1-\text{Tr}\left[ U^{\otimes 2}\left( |\phi_A\phi_B\rangle\langle\phi_A\phi_B |  \right) U^{\dagger \otimes 2}S_{BB'} \right] \nonumber\\
=&2\text{Tr}\left[ U^{\otimes 2}\left(|\phi_A\phi_B\rangle\langle\phi_A\phi_B|\right)^{\otimes 2}U^{\dagger \otimes 2} P_{BB'} \right],\nonumber\\
\end{align}
with $P_{BB'}=(\mathbb{I}_{BB'}-S_{BB'})/2$, where $\mathbb{I}_{BB'}$ and $S_{BB'}$ are the identity and swap operators supported over the replicas of $\mathcal{H}_{B}$.
For brevity, we denote $\mathcal{H}_{A'}$ and $\mathcal{H}_{B'}$ to be the replica Hilbert spaces of $\mathcal{H}_{A}$ and $\mathcal{H}_{B}$, respectively. 

The entangling power of $U$ is defined as the Haar average over $|\phi_A\rangle$ and $|\phi_{B}\rangle$ of Eq.~\eqref{eq:entanglementpurity}, 
\begin{eqnarray}
e_{p}(U)&=&\overline{\mathcal{E}(U|\psi\rangle)}^{|\phi_A\rangle, |\phi_B\rangle}\nonumber\\
&=&2\text{Tr}\left( U^{\otimes 2}\bm{\Pi}_{AA'}\bm{\Pi}_{BB'}U^{\dagger \otimes 2}\Pi^{-}_{BB'} \right),
\end{eqnarray}
where $\Pi_{AA'}$ and $\Pi_{BB'}$ denote the second moments of the Haar random states in the Hilbert spaces $\mathcal{H}_{A}\otimes\mathcal{H}_{A'}$ and $\mathcal{H}_{B}\otimes\mathcal{H}_{B'}$, respectively and are given by \cite{renes2004symmetric}
\begin{equation}
\bm{\Pi}_{AA'}=\dfrac{\mathbb{I}_{AA'}+S_{AA'}}{d_{A}(d_{A}+1)}\;\text{ and }\; \bm{\Pi}_{BB'}=\dfrac{\mathbb{I}_{BB'}+S_{BB'}}{d_{B}(d_{B}+1)}.
\end{equation}
Interestingly, $e_p(U)$ is related to the operator entanglement $E(U)$ of $U$ through the relation \cite{zanardi2000entangling} 
\begin{eqnarray}  
e_p(U) = \frac{1}{E(S)} \left[ E(U) + E(US) - E(S) \right],  
\end{eqnarray}  
where $ e_p $ is normalized to remain within the range $[0,1]$ and $S$ denotes the SWAP operator over $\mathcal{H}_{A}\otimes\mathcal{H}_{B}$. Moreover, the entangling power shares an intimate connection with information scrambling through averaged out-of-time ordered correlators \cite{styliaris2021information, varikuti2022out, shukla2022out} and tripartite mutual information \cite{pawan}.

Note that $ e_p $ is invariant under local unitary transformations, meaning that  
$e_p((u_A \otimes u_B) U_{AB}) = e_p(U_{AB} (v_A \otimes v_B)) = e_p(U_{AB})
$. Another local unitary invariant, known as gate-typicality ($ g_t $), has been introduced as a complementary measure to $ e_p $ and is defined as \cite{jonnadula2017impact, jonnadula2020entanglement} 
\begin{equation}  
g_t(U) = \frac{1}{2E(S)} \left[ E(U) - E(US) + E(S) \right].  
\end{equation}
The gate-typicality contrasts gates with similar entangling power but different operator entanglement. As we shall see in the later sections, $g_t$ plays an equally important role as $e_p$ in the emergence of quantum chaos.

\subsection{Non-stabilizing power}
Similar to the entangling power, one can define the non-stabilizing power for arbitrary unitary operators. Let $\mathcal{G}_{N}$ denote the group of Pauli strings. Then, a state $|\psi\rangle$ on $N$ qubits is 
considered to be a stabilizer state if there exists a subgroup $\mathcal{S} \subset \mathcal{G}_N$ of size $|\mathcal{S}| = 2^N$, where every element $P \in \mathcal{S}$ satisfies $P|\psi\rangle = |\psi\rangle$, making $|\psi\rangle$ a simultaneous $+1$ eigenstate of all $P$ in $\mathcal{S}$ \cite{gottesman1998heisenberg}. The normalizers of the Pauli group constitute the Clifford group. The Clifford operations generate the stabilizer states when they act on standard computational basis states. The stabilizer dynamics are known to be efficiently classically simulable \cite{gottesman1998heisenberg, aaronson2004improved}. 
Given an arbitrary state $|\psi\rangle$, one can quantify its closeness to being a stabilizer state using the linear stabilizer entropy as follows \cite{leone2022stabilizer}: 
\begin{align}
\label{eq:nonstabpowerdef}
 \mathcal{M}(|\psi\rangle)=& 1-2^N\sum_{i=0}^{2^{2N}-1}\dfrac{1}{2^{2N}}\langle\psi |P_i|\psi\rangle^4\nonumber\\
 =&1-2^N\text{Tr}\left[ Q\left(|\psi\rangle\langle\psi |\right)^{\otimes 4} \right], 
\end{align}
where 
\begin{equation}
Q=\left( \dfrac{1}{2^{2N}}\sum_{i=0}^{2^{2N}-1}P^{\otimes 4}_i \right)    
\end{equation}
is a projector in $\mathcal{H}^{\otimes 4}$, i.e., $Q^2=Q$, and the set $\{P_i\}$ denote the set of all $N$-qubit Pauli strings. Non-stabilizerness has the following key properties \cite{leone2022stabilizer}: (i) $\mathcal{M}(|\psi\rangle)=0$ if and only if $|\psi\rangle$ is a stabilizer state, i.e., $|\psi\rangle$ is generated by the application of Clifford unitaries on $|0\rangle^{\otimes N}$, (ii) $\mathcal{M}(|\psi\rangle)$ is invariant under the action of arbitrary Clifford operations on $|\psi\rangle$, i.e., $\mathcal{M}(C|\psi\rangle)=\mathcal{M}(|\psi\rangle)$, and (iii) it is upper bounded by $\mathcal{M}(|\psi\rangle)\leq \log_{2}((d+1)/2)$.

Having defined the non-stabilizerness of quantum states, one can define the non-stabilizing power of a unitary. The average amount of non-stabilizerness that a unitary generates when it acts upon an arbitrary stabilizer state is \cite{leone2022stabilizer} 
\begin{eqnarray}\label{nonst2}
m_{p}(U)&=& \overline{\mathcal{M}(U|\psi\rangle)} \nonumber\\
&=&1-2^N\text{Tr}\left[ QU^{\otimes 4}\overline{\left(|\psi\rangle\langle\psi |\right)^{\otimes 4}}U^{\dagger \otimes 4} \right],
\end{eqnarray}
where the overline indicates the average over all the stabilizer states in the $N$-qubit Hilbert space. Using the invariance of $\mathcal{M}(|\psi\rangle)$, it is apparent from Eq.~(\ref{nonst2}) that the non-stabilizing power of a unitary remains invariant under the action of random Cliffords on it, i.e., $m_{p}(C_1UC_2)=m_{p}(U)$. Moreover, $m_{p}(U)=0$ if and only if $U$ is a Clifford unitary. Also, the average non-stabilizing power of Haar-random unitaries is 
\begin{equation}\label{non-stab-haar-exact}
\overline{m_p}=1-2^N\text{Tr}\left( Q \bm{\Pi}_{4} \right)=1-\dfrac{4}{2^N+3}\,, 
\end{equation} 
where $\bm{\Pi}_4$ is the fourth moment of the ensemble of stabilizer states in $\mathcal{H}^{2^N}$.   
Similar to the entangling power, the non-stabilizing power has also been shown to have intimate connections with information scrambling through the $8$-point OTOCs for a specific choice of the initial operators \cite{leone2022stabilizer}.
Apart from the stabilizer entropy, other measures of non-stabilizerness such as stabilizer fidelity, stabilizer extent \cite{bravyi2019simulation}, stabilizer rank \cite{bravyi2016trading, bravyi2019simulation, PhysRevLett.116.250501}, Wigner negativity, and mana \cite{veitch2014resource, pashayan2015estimating} have also been studied in recent years. In addition, the notion of non-stabilizerness has also been extended to the Heisenberg picture \cite{dowling2024magic}. We focus here on the stabilizer entropy, due to its information-theoretic nature and its known connections with quantum chaos diagnostics \cite{garcia2023resource, leone2022stabilizer, tirrito2024quantifying, leone2024stabilizer, oliviero2022measuring, tarabunga2023many}.

\subsection{Summary of Main Results}
This work primarily aims to understand the impact of random Clifford operators on the non-stabilizing power in generic quantum circuits. Our central result, detailed in Theorem~\ref{theo1} of Sec.~\ref{sec-impact}, can be stated as follows: When a random Clifford operator $C$ is sandwiched between two arbitrary non-Clifford operators $U$ and $V$, the final non-stabilizing power, on average, is 
\begin{equation}
\langle m_p(VCU)\rangle_{C}=m_p(U)+m_p(V)-\dfrac{m_{p}(U)m_p(V)}{\overline{m_p}}.  
\end{equation}
A complete derivation of the above equation is provided in Appendix~\ref{app-e}. Surprisingly, the averaging procedure decouples the non-stabilizing powers of $U$ and $V$, given by $m_p(U)$ and $m_p(V)$, respectively. This result can have far-reaching consequences, including enabling precise control over the generation of non-stabilizing power in experimental implementations. 

A key implication, as established in Corollary~\ref{corrollaryc} of Sec.~\ref{sec-impact-a}, is that the non-stabilizing power thermalizes exponentially with time to the Haar-averaged value $\overline{m_p}$ in circuits composed of interlaced Clifford and non-Clifford operations. While Clifford gates alone do not generate non-stabilizerness, their presence ensures thermalization of the non-stabilizing power, which may or may not occur in their absence. 
Consider a sequence of identical non-Clifford operations, denoted with $U$, interspersed with random Clifford gates. 
The non-stabilizing power of this sequence then decays to the Haar value, with a relaxation rate $\lambda$ that depends only on $m_p(U)$ and $\overline{m_p}$, according to the following simple form: 
\begin{eqnarray}
 \lambda=-\ln\left[ 1-\dfrac{m_p(U)}{\overline{m_p}} \right] .  
\end{eqnarray}
In addition, our result allows one to define the operator-space non-stabilizing power (OSNP) for an arbitrary quantum evolution---the average amount of non-stabilizing power a random Clifford unitary $ C $ acquires when it evolves in the Heisenberg picture under a quantum evolution $ U(t) $, given by $ \langle m_p(U^{\dagger}(t) C U(t)) \rangle_{C} $. We demonstrate this quantity for both integrable and chaotic Ising chains. The corresponding results are presented in Sec.~\ref{sec-impact-b}.

{Previous studies have suggested a 
non-zero amount of non-stabilizerness is necessary and sufficient for quantum chaos \cite{leone2021quantum}}. 
Our work further establishes that the collective behavior of non-stabilizing power, entangling power, and gate typicality of the gates governs the emergence of quantum chaos, as evidenced in Sec.~\ref{sec-quant-chaos}. 
To demonstrate this, we take minimally random brick-wall Floquet circuits having fixed interactions, where the randomness is incorporated through random single-qubit Clifford operations. As the interaction strength is varied, the circuit displays transitions from regular to chaotic regimes. As a precursor to our analyses, we also study the $m_p$ of two-qubit gates and show that it varies smoothly with the interaction parameters; see Appendices~\ref{app-a}–\ref{app-c} for details. The rest of this article is dedicated to deriving these results and discussing their implications.


\section{Impact of Interlaced Clifford Operations on Non-Stabilizing Power}
\label{sec-impact}

In this section, we address the central goal of our paper, which is to demonstrate how random Clifford operations affect the non-stabilizing power in generic quantum circuits. Given an arbitrary non-Clifford unitary $U$, we have $ m_{p}(U) = m_{p}(CU) $ for any Clifford operator $C$. This relation implies that if 
$ \mathcal{U} = CU $  acts on an initial stabilizer state $|\psi\rangle$, then $C$ does not contribute to the generation of the magic in it. However, repeated applications of $ \mathcal{U} $  may lead to different outcomes compared to those of $U$, that is,
\begin{equation}  
m_{p}(\mathcal{U}^2) = m_{p}(CUCU) = m_{p}(UCU) \neq m_{p}(U^2).
\end{equation}
The inequality generically holds { if neither $C$ nor $U$ is the identity operator}. It is a main aim of this work to provide a comprehensive understanding of the non-stabilizing power of a composite unitary obtained by sandwiching a Clifford unitary between two non-Clifford unitaries (see the setting in Fig.~\ref{fig:sch2}). 

\begin{figure}
    \centering
    \includegraphics[width=0.625\linewidth]{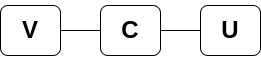}
    \caption{The setting considered in this work involves a random $ n $-qubit Clifford operation $ C $ sandwiched between two arbitrary non-Clifford unitaries, denoted as $ U $ and $ V $. We compute the average non-stabilizing power of this configuration, where the averaging is performed over the Clifford group supported over $ n $-qubits. }
    \label{fig:sch2}
\end{figure}

We first consider the case in which a random Clifford operation $C$ is sandwiched between two arbitrary non-Clifford unitaries, $U$ and $V$. The non-stabilizing power of the resulting composite unitary averaged over the Clifford group is then given by
\begin{align}\label{25}  
& \langle m_{p}(VCU) \rangle_{C} =1-2^N\int_{C} d\mu(C) \\  
& \hspace{2.em}\text{Tr}\left[ QV^{\otimes 4}C^{\otimes 4}U^{\otimes 4}\overline{\left(|\psi\rangle\langle\psi |\right)^{\otimes 4}} U^{\dagger \otimes 4}C^{\dagger \otimes 4}V^{\dagger \otimes 4} \right],  \nonumber 
\end{align}
where $\langle \cdots\rangle_{C}$ denotes the average over the Clifford group and $d\mu(C)$ is the invariant Haar measure associated with the Clifford group. Interestingly, $\langle m_p(VCU)\rangle_{C}$ is intimately connected to the non-stabilizing powers of $U$ and $V$, denoted by $m_p(U)$ and $m_p(V)$, respectively. This follows intuitively from Eq.~(\ref{25}), whose right-hand side remains invariant under the transformations $ U \to C_1 U C_2 $ and $ V \to C_3 V C_4 $ for arbitrary Clifford operations $ C_1, C_2, C_3$, and $ C_4 $. The non-stabilizing powers $ m_p(U) $ and $ m_p(V) $ 
naturally respect these transformations, suggesting a potential decoupling of $ \langle m_{p}(VCU) \rangle_{C} $ in terms of $ m_{p}(U) $ and $ m_{p}(V) $, i.e., $ \langle m_{p}(VCU) \rangle_{C} \sim f(m_{p}(U), m_{p}(V)) $. Indeed, we confirm this structure in the following theorem.

\begin{theorem}\label{theo1}
Let $U$ and $V$ be two arbitrary non-Clifford unitary operators supported over an $N$-qubit Hilbert space $\mathcal{H}=\mathbb{C}^{2^N}$ with non-stabilizing powers $m_p(U)$ and $m_p(V)$, respectively, and let $C$ be a Clifford operator sampled at random from the Clifford group according to its Haar measure. Then, the following relation holds:
\begin{equation}\label{mainres}
\langle m_{p}(VCU) \rangle_{C} = m_{p}(U)+m_{p}(V)-\dfrac{m_{p}(U)m_{p}(V)}{\overline{m_{p}}},  
\end{equation}
where $\overline{m_{p}}=\langle m_{p}(W)\rangle_{W}$ denotes the non-stabilizing power averaged over the Haar-random unitaries $W\in \mathcal{U}(2^N)$.  
\end{theorem}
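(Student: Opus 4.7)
The plan is to reduce the theorem to a Clifford four-twirl computation and then isolate the two Clifford-invariant channels that carry all of the dependence of the averaged quantity. First, from the definition in Eq.~\eqref{nonst2} and the cyclicity of the trace,
\begin{equation*}
\langle m_p(VCU)\rangle_C = 1 - 2^N\, \text{Tr}\!\left[\bigl(V^{\dagger\otimes 4} Q V^{\otimes 4}\bigr)\, \mathcal{T}_4^C(\Omega_U)\right],
\end{equation*}
where $\Omega_U := U^{\otimes 4}\, \overline{(|\psi\rangle\langle\psi|)^{\otimes 4}}\, U^{\dagger\otimes 4}$ and $\mathcal{T}_4^C(X) := \int d\mu(C)\, C^{\otimes 4} X C^{\dagger\otimes 4}$ is the fourth-order Clifford twirl. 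The entire $C$-average now acts on $\Omega_U$, while every appearance of $V$ is packaged into the single Clifford-invariant ``probe'' $V^{\dagger\otimes 4} Q V^{\otimes 4}$.

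The second step uses the key structural fact that $Q$ lies in the commutant of $C^{\otimes 4}$ for every Clifford $C$: since $C P_i C^\dagger = \pm P_{\sigma_C(i)}$, the signs cancel at the fourth tensor power and the Pauli sum in $Q$ is permutation-invariant. Consequently, $\mathcal{T}_4^C$ acts as the orthogonal projection onto the Clifford commutant, and I would decompose this projection as
\begin{equation*}
\mathcal{T}_4^C(\Omega_U) = a_U\, \bm{\Pi}_4 + b_U\, Q + R_U,
\end{equation*}
where $a_U, b_U$ are scalar Clifford invariants of $U$, fixed by the Hilbert--Schmidt overlaps $\text{Tr}[\bm{\Pi}_4 \Omega_U]$ and $\text{Tr}[Q \Omega_U]$, and $R_U$ lies in the orthogonal complement of $\text{span}\{\bm{\Pi}_4, Q\}$ inside the commutant. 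Crucially, $\text{Tr}[Q \Omega_U] = (1 - m_p(U))/2^N$ by definition, so $b_U$ already carries the entire $U$-dependence of the theorem in the form of $m_p(U)$.

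Third, evaluating the trace term by term, the $\bm{\Pi}_4$ piece reconstructs the Haar contribution $\overline{m_p}$, while the $Q$ piece produces $b_U \, \text{Tr}[Q\, V^{\dagger\otimes 4} Q V^{\otimes 4}]$; this overlap, by the same identity applied with $V$ in place of $U$, is proportional to $(1 - m_p(V))/2^N$, thereby injecting $m_p(V)$ symmetrically. After substituting $\overline{m_p} = 1 - 4/(2^N+3)$ and regrouping, the result takes the manifestly multiplicative form
\begin{equation*}
\overline{m_p} - \langle m_p(VCU)\rangle_C = \frac{\bigl(\overline{m_p} - m_p(U)\bigr)\bigl(\overline{m_p} - m_p(V)\bigr)}{\overline{m_p}},
\end{equation*}
which is algebraically identical to the theorem statement. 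Consistency checks are immediate: $U$ Clifford returns $m_p(V)$, $V$ Clifford returns $m_p(U)$, and both $U, V$ Haar-random yield $\overline{m_p}$.

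The main obstacle is controlling the remainder $R_U$. For $N \ge 2$, the Clifford commutant at fourth order is strictly larger than $\text{span}\{\bm{\Pi}_4, Q\}$, with additional basis elements arising from the stochastic Lagrangian subspaces that encode the Clifford 4-design defect. The technical heart of the argument is therefore to show that these extra commutant elements either pair trivially with $V^{\dagger\otimes 4} Q V^{\otimes 4}$ under the Hilbert--Schmidt inner product for every unitary $V$, or that their cross-contributions cancel upon summation. Establishing this orthogonality/cancellation is what forces the genuinely multiplicative form of the theorem; the remaining steps are routine trace evaluations.
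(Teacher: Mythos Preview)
Your setup matches the paper: the $C$-average is a Clifford four-twirl of $\Omega_U$ paired against $V^{\dagger\otimes 4}QV^{\otimes 4}$. But your third step contains a gap you do not flag, and the ``main obstacle'' you do flag is exactly where the paper takes a different, much shorter route.

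The unflagged gap: you claim $\text{Tr}[Q\,V^{\dagger\otimes 4}QV^{\otimes 4}]\propto(1-m_p(V))/2^N$ ``by the same identity applied with $V$ in place of $U$.'' That identity is $\text{Tr}[Q\,\Omega_V]=(1-m_p(V))/2^N$ with $\Omega_V=V^{\otimes 4}\bm{\Pi}_4V^{\dagger\otimes 4}$; replacing the stabilizer fourth moment $\bm{\Pi}_4$ by $Q$ yields a genuinely different degree-eight invariant of $V$, not proportional to $1-m_p(V)$ in general. So even before you reach $R_U$, the $\{\bm{\Pi}_4,Q\}$ decomposition does not deliver the $m_p(V)$ dependence you assert. (The companion claim that the $\bm{\Pi}_4$ piece ``reconstructs the Haar contribution $\overline{m_p}$'' is likewise unjustified without first determining $a_U$.)

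The paper's route avoids both problems. It expands the twirl in the basis $\{QT_\sigma,\,Q^\perp T_\sigma\}_{\sigma\in S_4}$ with generalized Weingarten weights $W^\pm_{\pi\sigma}$. Because $\Omega_U$ is permutation-invariant (from $[T_\pi,U^{\otimes 4}]=0$ and $T_\pi|\psi\rangle^{\otimes 4}=|\psi\rangle^{\otimes 4}$), every coefficient $\text{Tr}(\Omega_UQT_\pi)$ collapses to the single scalar $\text{Tr}(\Omega_UQ)=(1-m_p(U))/2^N$, and likewise for $Q^\perp$. There is no uncontrolled remainder: the twirled $\Omega_U$ is exactly $(1-m_p(U))\,\bm{\Pi}_4 + m_p(U)\sum_{\pi,\sigma}W^-_{\pi\sigma}Q^\perp T_\sigma$. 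Pairing with $V^{\dagger\otimes 4}QV^{\otimes 4}$ gives
\[
\langle m_p(VCU)\rangle_C=m_p(U)+m_p(V)-m_p(U)m_p(V)-m_p(U)\,\kappa(V),
\]
with one remaining unknown $\kappa(V)=2^N\text{Tr}\bigl(V^{\dagger\otimes 4}QV^{\otimes 4}\sum_{\pi,\sigma}W^-_{\pi\sigma}Q^\perp T_\sigma\bigr)$. Rather than analysing the extra stochastic-Lagrangian commutant elements, the paper now substitutes a Haar-random $U$: the left side becomes $\overline{m_p}$ and each $m_p(U)$ on the right becomes $\overline{m_p}$, giving a linear equation whose solution is $\kappa(V)=m_p(V)\bigl(1/\overline{m_p}-1\bigr)$. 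Plugging back finishes the proof. This Haar self-consistency trick is the missing idea; it replaces the orthogonality/cancellation argument you propose with a two-line computation.
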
    

A detailed proof of this result is given in Appendix~\ref{app-e}. Although perhaps surprising, the simple relation for $\langle m_p(VCU) \rangle_C$ given in the above theorem arises as a direct consequence of the averaging performed over the Clifford group. To build intuition and validate the above theorem, it is useful to consider limiting cases. First, when $ U $ is the identity operator, we obtain $ \langle m_p(CV)\rangle_{C} = m_p(V) $. Conversely, if $ U $ (or $ V $) is chosen uniformly at random from the unitary group $ \mathcal{U}(d) $, the theorem yields $ \langle \langle m_p(VCU)\rangle_{C} \rangle_{U \in U(d)} = \overline{m_p} $, as expected.

Theorem~\ref{theo1} provides key insights into how random Clifford operations can enhance the non-stabilizing power. For instance, if one finds $U$ and $V$ such that $\langle m_{p}(UCV) \rangle_{C} > m_{p}(UV)$, it then follows that certain Clifford operations boost the non-stabilizing power beyond the serial application of $U$ and $V$. Moreover, if one considers $U$ and $V$ such that $m_p(U)=m_p(V)$, then Eq.~(\ref{mainres}) will become 
\begin{align}\label{sameU}
\langle m_{p}(VCU)\rangle_{C} 
=m_p(U)\left( 2-\dfrac{m_p(U)}{\overline{m_p}} \right).
\end{align}
If $m_p(U)<\overline{m_p}$, it follows that $\langle m_p(VCU)\rangle_{C} > m_p(U)$. In contrast, if $m_p(U)>\overline{m_p}$, we have $\langle m_p(VCU)\rangle_{C}<m_p(U)$. It may be surprising that, while in the former case the action of random Clifford operations enhances the final non-stabilizing power, in the latter case, the same operations diminish it. In addition, for $m_p(U)=\overline{m_p}$, we have $\langle m_p(VCU)\rangle_{C}=\overline{m_p}$. This indicates that $\overline{m_p}$ is the only non-trivial fixed point of Eq.~(\ref{sameU}). One can show that the same is true for Eq.~(\ref{mainres}). Consequently, {as we discuss in the next subsections}, for any $0<m_p(U)\neq \overline{m_p}$, one can repeatedly apply the non-Clifford unitaries followed by the random Clifford operations and reach $\overline{m_p}$.

{
While our main result in Eq.~(\ref{mainres}) characterizes the average effect of Clifford operators on the non-stabilizing power, it is also of interest to examine the fluctuations of $m_p(VCU)$ over random Clifford instances. This is important for assessing the typicality of the average behavior, and for determining whether Eq.~\eqref{mainres} is representative of individual Clifford realizations or dominated by rare instances. Here, we quantify the fluctuations through the variance of $m_p(VCU)$ over the entire Clifford group. We consider the simplest case where $U$ and $V$ are identical. In this setting, we find that 
\begin{align}\label{variance_main}
&\Delta^2 m_p(UCU)\nonumber\\ 
&\le \left[ 2m_p(U)-\dfrac{m^2_p(U)}{\overline{m_p}} \right] \left[ 1-2m_p(U)+\dfrac{m^2_p(U)}{\overline{m_p}} \right].
\end{align}
A detailed derivation of this inequality is given in Appendix~\ref{app-var}. The inequality becomes tight in the limit $m_p(U)=0$. Consequently, the fluctuations are minimal when $m_p(U)$ is small. Furthermore, when $m_p(U)=\overline{m_p}$, the inequality becomes $\Delta^2 m_p(UCU)\leq \overline{m_p}(1-\overline{m_p})$. Then, the right-hand side vanishes in the limit of large $N$ as $\overline{m_p}=1-4/(2^N+3)$ approaches $1$ with increasing $N$; see Eq.~(\ref{non-stab-haar-exact}). 
Taken together, these results indicate that the variance initially increases with $m_p(U)$, reaches a maximum at intermediate values, and decreases again for large $N$ as $m_p(U)$ approaches $\overline{m_p}$. We confirm this trend through numerical simulations presented later in this section. In the following, we discuss the thermalization of non-stabilizing power, along with other implications of Theorem~\ref{theo1}, including an analysis of the fluctuations.

}

\subsection{Evolution of Non-Stabilizing Power in Clifford-Interlaced Circuits}
\label{sec-impact-a}
The applicability of Eq.~(\ref{mainres}) extends beyond the single insertion of a random Clifford operation into quantum dynamics. In particular, Theorem \ref{theo1} enables an analytical derivation of the final non-stabilizing power when non-Clifford unitaries are repeatedly interspersed with independent random Clifford operations. For instance, let us take three arbitrary unitaries $ U_1, U_2, $ and $ U_3 $ interspersed with $C_1$ and $C_2$ drawn uniformly at random from the Clifford group. Upon averaging over both $C_1$ and $C_2$ independently, we get 
\begin{align} 
\langle m_{p}(&U_3C_2U_2C_1U_1) \rangle_{C_1, C_2}\nonumber\\
=\;&m_{p}(U_1)+m_{p}(U_2)+m_{p}(U_3)-\left[ m_{p}(U_1)m_{p}(U_2)\right.\nonumber\\
& \left.+m_{p}(U_1)m_{p}(U_3)+m_{p}(U_2)m_{p}(U_3) \right]/(\overline{m_{p}})\nonumber\\
&+ m_{p}(U_1)m_{p}(U_2)m_{p}(U_3)/\left(\overline{m_{p}}\right)^2. 
\end{align}
The right-hand side of the above equation can be rewritten in a simplified form as 
\begin{align}
\langle m_{p}(U_3C_2U_2&C_1U_1) \rangle_{C_1, C_2}=\overline{m_p}\left[ 1-\left( 1-\dfrac{m_p(U_1)}{\overline{m_p}} \right)\right. \nonumber\\
&\left.\left( 1-\dfrac{m_p(U_2)}{\overline{m_p}} \right)\left( 1-\dfrac{m_p(U_3)}{\overline{m_p}} \right) \right]. 
\end{align}
In the same way, for a generic quantum circuit consisting of a number $t$ of non-Clifford unitaries interlaced with independent random Clifford operations, the resulting non-stabilizing power can be determined from the following corollary of Theorem \ref{theo1}:

\begin{corollary}\label{corrollaryc}
Let $U^{(t)}=U_{t}C_{t-1}U_{t-1}\cdots C_1U_1$, where $\{C_{j}\}$ for all $1\leq j\leq t-1$ denote random Clifford operators drawn independently from the Clifford group according to its Haar measure, and $\{U_j\}$ for all $1\leq j\leq t$ are arbitrary non-Clifford unitaries, both supported over an $N$-qubit Hilbert space. Then, Theorem \ref{theo1} implies that 
\begin{eqnarray}\label{corrolary}
\left\langle m_p\left( U^{(t)} \right)\right\rangle_{\tilde{C}}=\overline{m_p}\left[ 1-\prod_{j=1}^{t}\left( 1-\dfrac{m_p(U_j)}{\overline{m_p}} \right) \right],     
\end{eqnarray}
where $\tilde{C}$ denotes the independent averaging over the Clifford group corresponding to $C_1, C_2, \cdots, C_{t-1}$.
\end{corollary}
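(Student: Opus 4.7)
The plan is to prove the corollary by induction on $t$, peeling off one (non-Clifford, Clifford) pair at a time from the outside of $U^{(t)}$. The base case $t=1$ is immediate, since the product on the right collapses to $\overline{m_p}[1-(1-m_p(U_1)/\overline{m_p})] = m_p(U_1)$, matching $\langle m_p(U^{(1)})\rangle = m_p(U_1)$; the case $t=2$ is exactly the content of Theorem~\ref{theo1}.

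For the inductive step, I would write $U^{(t)} = U_t \, C_{t-1} \, U^{(t-1)}$ and exploit the independence of $C_{t-1}$ from the inner Cliffords $C_1,\ldots,C_{t-2}$. Conditioning on the inner Cliffords, the operator $U^{(t-1)}$ is a fixed unitary, so Theorem~\ref{theo1} applied with $V \to U_t$, $C \to C_{t-1}$, $U \to U^{(t-1)}$ yields
\begin{equation*}
\left\langle m_p(U^{(t)})\right\rangle_{C_{t-1}} = m_p(U_t) + m_p(U^{(t-1)}) - \frac{m_p(U_t)\,m_p(U^{(t-1)})}{\overline{m_p}}.
\end{equation*}
Averaging both sides over the remaining Cliffords $\tilde{C}' = \{C_1,\ldots,C_{t-2}\}$ and using linearity, the only $\tilde{C}'$-dependent quantity on the right is $m_p(U^{(t-1)})$, whose Clifford-average is supplied by the inductive hypothesis applied at level $t-1$.

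With the shorthand $y_j := 1 - m_p(U_j)/\overline{m_p}$, the recursion reads $x_t = \overline{m_p}(1-y_t) + x_{t-1}\, y_t$ with $x_{t-1} = \overline{m_p}(1 - \prod_{j=1}^{t-1} y_j)$, and a one-line telescoping simplification gives $x_t = \overline{m_p}(1 - \prod_{j=1}^{t} y_j)$, which is the claimed identity. The only mildly delicate point I anticipate is that Theorem~\ref{theo1} is stated for non-Clifford $U$ and $V$, whereas $U^{(t-1)}$ could, for atypical realizations of the inner Cliffords, degenerate into a Clifford operator; however, the limiting cases discussed right after Theorem~\ref{theo1} (in particular, $U=\mathbb{I}$ giving $\langle m_p(CV)\rangle_C = m_p(V)$) show that the identity extends continuously to Clifford inputs via $m_p(U^{(t-1)})=0$, so no genuine obstacle arises beyond careful bookkeeping of the nested Clifford averages.
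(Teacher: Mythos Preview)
Your proposal is correct and follows essentially the same route as the paper: both peel off $U_t C_{t-1}$ from $U^{(t)}=U_t C_{t-1} U^{(t-1)}$, apply Theorem~\ref{theo1} to obtain the recursion, and then iterate (the paper phrases the recursion multiplicatively as $1-\langle m_p(U^{(t)})\rangle/\overline{m_p}=(1-m_p(U_t)/\overline{m_p})(1-m_p(U^{(t-1)})/\overline{m_p})$, which is your $y_j$ identity). Your explicit handling of the base case and of the possibility that $U^{(t-1)}$ degenerates to a Clifford operator is slightly more careful than the paper's presentation, but the argument is the same.
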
 

\begin{proof}
In order to prove the result in Eq.~(\ref{corrolary}), it is useful to consider the final unitary as $U^{(t)}=U_t C_{t-1}U^{(t-1)}$, where $U^{(t-1)}=U_{t-1}C_{t-2}U_{t-2}\cdots C_1U_1$. Then, Theorem ~\ref{theo1} implies the following:
\begin{align}\label{thermag}
\langle& m_{p}(U^{(t)})\rangle_{C_{t-1}} \nonumber\\
&= m_{p}(U_{t}) + m_{p}(U^{(t-1)}) - \dfrac{m_{p}(U_{t}) m_{p}(U^{(t-1)})}{\overline{m_{p}}}\nonumber\\
&=\overline{m_p}\left[ 1-\left( 1-\dfrac{m_p(U_t)}{\overline{m_p}} \right)\left( 1-\dfrac{m_p(U^{(t-1)})}{\overline{m_p}} \right) \right].\nonumber\\
\end{align}
We rewrite the above equation as 
\begin{align}
1-&\dfrac{ \langle m_p(U^{(t)})\rangle_{C_{t-1}} }{\overline{m_p}}\nonumber\\
&\hspace{2em}=\left( 1-\dfrac{m_p(U_{t})}{\overline{m_p}} \right)\left( 1-\dfrac{ m_p(U^{(t-1)})}{\overline{m_p}} \right).
\end{align}
Then, recursive applications of Theorem~\ref{theo1} at every time step lead to the following equation: 
\begin{eqnarray}\label{29}
 1-\dfrac{\langle m_p(U^{(t)})\rangle_{\tilde{C}}}{\overline{m_p}}  =  \prod_{j=1}^{t}\left( 1-\dfrac{m_p(U_{j})}{\overline{m_p}} \right). 
\end{eqnarray}
With a slight adjustment of terms in Eq.~(\ref{29}), we finally get
\begin{eqnarray*}
\left\langle m_p\left( U^{(t)} \right)\right\rangle_{\tilde{C}}=\overline{m_p}\left[ 1-\prod_{j=1}^{t}\left( 1-\dfrac{m_p(U_j)}{\overline{m_p}} \right) \right],    
\end{eqnarray*}
which gives the evolution of the non-stabilizing power under repeated applications of random Clifford and non-Clifford operators. 
\end{proof} 

Interestingly, the final non-stabilizing power depends only on the $m_p(U_j)$ rather than any other specific properties of the unitaries. It is essential that the interlaced Clifford elements at different time steps are independent, as correlations would prevent the decoupling of the non-stabilizing powers. Careful observation of Eq.~(\ref{corrolary}) reveals that in the limit of $t\rightarrow \infty$, the right-hand side converges to the Haar averaged value $\overline{m_p}$. Moreover, since $ |1 - m_p(U_j)/\overline{m_p}| \leq 1 $ for any $ U_j $, the term 
\begin{equation*}
\prod_{j=1}^{t} \left( 1 - \frac{m_p(U_j)}{\overline{m_p}} \right)
\end{equation*}
in Eq.~(\ref{corrolary}) is expected to decay exponentially with $t$. If there exists some $U_j$ such that $m_p(U_j)=\overline{m_p}$, then $\langle m_p(U^{(t)})\rangle_{\tilde{C}}=\overline{m_p}$ for any $t\geq j$. 
It is worth noting that doped Clifford circuits, where $ T$-gates are repeatedly interspersed with random Clifford operations, have been studied in previous works~\cite{leone2021quantum, leone2022stabilizer}. It has been shown that in such circuits, $m_p$ converges exponentially to $\overline{m_p}$. In contrast, our work explores a significantly more general setting that goes beyond $ T$-doped circuits. By analyzing arbitrary non-Clifford gates in conjunction with random Clifford layers, we uncover richer dynamics and deeper insights into the thermalization of non-stabilizing power in generic quantum circuits.

In the following, we study Eq.~(\ref{corrolary}) by considering two different scenarios involving interlacing Clifford and non-Clifford dynamics. First, we consider the case where all $ U_j $ are either identical or chosen alternatively from a fixed set of unitaries. In the second case, we select randomly generated $ U_j $ with varying non-stabilizing powers.

\subsubsection{\textbf{Case-1: All non-Clifford unitaries are the same}}
We first consider the case where identical non-Clifford operations ($U_j=U$, $\forall$ $ 0<j\leq t$) are interlaced with independent random Clifford unitaries. 
From Corollary~\ref{corrollaryc}, we obtain the closed-form expression 
\begin{align}\label{expo_magic}
\langle m_{p}(U^{(t)})\rangle_{\tilde{C}}=&\overline{m_{p}}\left[ 1-\left( 1-\dfrac{m_{p}(U)}{\overline{m_{p}}} \right)^{t} \right]\nonumber\\
=&\overline{m_p}\left[ 1-\exp\left\{ t\ln\left( 1-\dfrac{m_p(U)}{\overline{m_p}} \right) \right\} \right],
\end{align}
for $m_{p}(U)<\overline{m_{p}}$.
As it becomes evident from the above equation, $\langle m_{p}(U^{(t)})\rangle_{\tilde{C}}$ relaxes exponentially to the Haar-averaged value $\overline{m_p}$ with $t$. The corresponding relaxation rate is given by
\begin{equation}\label{eq:relax_rate}
\lambda=-\ln\left(1-\dfrac{m_{p}(U)}{\overline{m_{p}}}\right).   
\end{equation} 
On the contrary, when $m_p(U)>\overline{m_p}$, non-stabilizing power evolves as
\begin{align}\label{expo_magic_above}
\langle &m_{p}(U^{(t)})\rangle_{\tilde{C}}\nonumber\\
&=\overline{m_p}\left[ 1- (-1)^{t}\exp\left\{ t\ln\left(\dfrac{m_p(U)}{\overline{m_p}}-1 \right) \right\} \right],
\end{align}
indicating an exponentially damped oscillatory relaxation towards the Haar-averaged value. 
In all subsequent numerical analyses, we demonstrate our results by considering $m_p(U)<\overline{m_p}$.

\begin{figure}[h]
\includegraphics[scale=0.35]{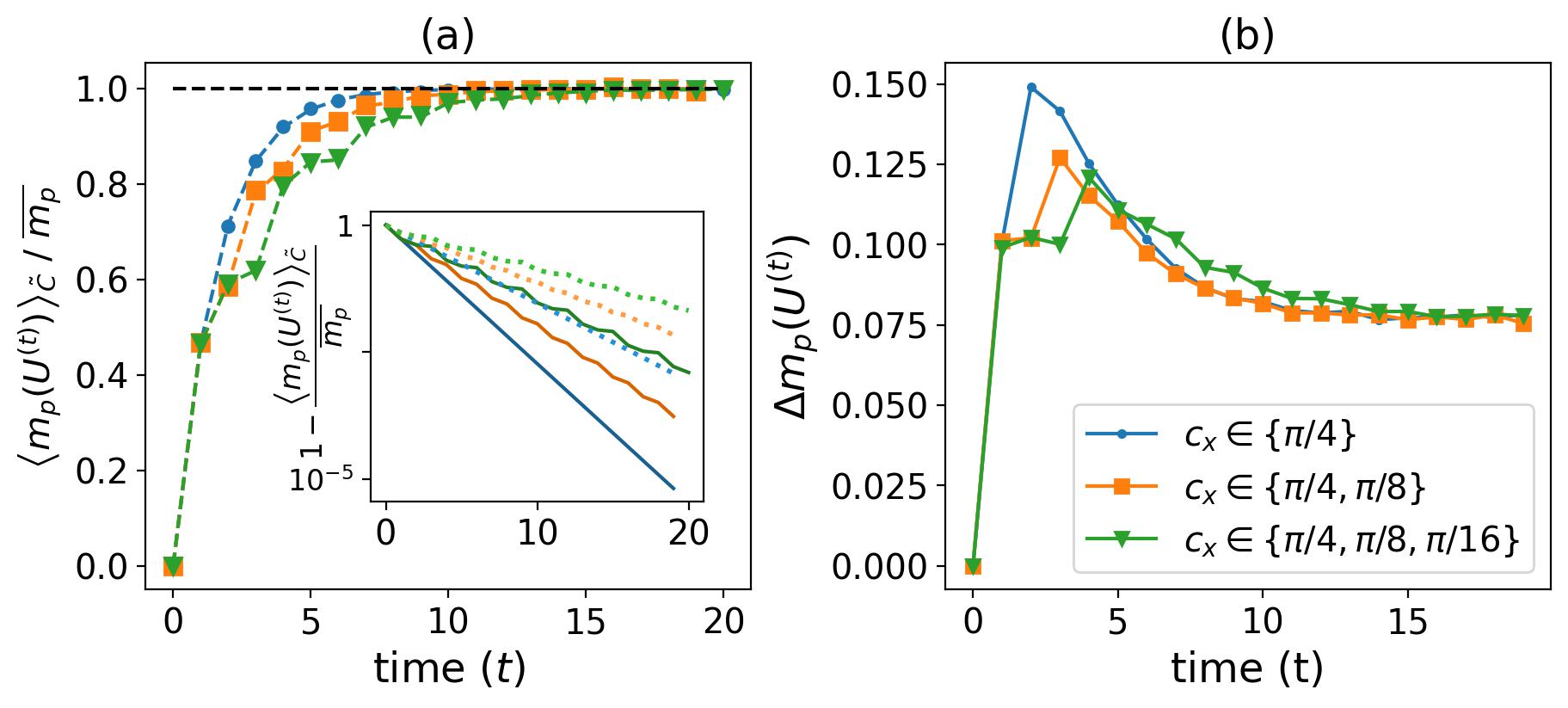}
\caption{\label{fig:nonstab1} Evolution of the non-stabilizing power $m_p$ in quantum circuits for $N=2$ qubits under three setups: (i) a fixed non-Clifford unitary interleaved with random Clifford gates (blue, circles), (ii) two distinct non-Clifford unitaries alternated with random Clifford gates (orange, squares), and (iii) a periodic sequence of three distinct non-Clifford unitaries interleaved with random Clifford gates (green, triangles). Each non-Clifford gate is of the form $U = \exp\{-i c_x \sigma_x \otimes \sigma_x /2\}$, with $c_x=\pi/4$ in (i), $c_x\in\{\pi/4,\pi/8\}$ in (ii), and $c_x\in\{\pi/4,\pi/8,\pi/16\}$ in (iii). Symbols denote numerical data averaged over $10^4$ circuit realizations, while dashed curves correspond to Eq.~(\ref{corrolary}). Inset: Semi-log plot showing exponential relaxation toward the Haar value for $N=2$ (solid lines) and for the same gates embedded in a four-qubit system as $\mathbb{I}_2 \otimes e^{-i c_x \sigma_x \otimes \sigma_x} \otimes \mathbb{I}_2$ (dotted lines). {(b) Fluctuations of $m_p(U^{(t)})$, quantified by the standard deviation $\Delta m_p(U^{(t)})$, for the three settings considered in (a). 
The fluctuations peak at intermediate times, consistent with the bound in Eq.~(\ref{variance_main}). 
We further observe that, at any fixed time, the fluctuation data coincide across all three cases whenever $\langle m_p(U^{(t)}) \rangle_{C}$ also coincide, indicating that the fluctuations depend mostly on the non-stabilizing power of the gates. For both plots, the numerical simulations are performed over $\sim 10^4$ circuit realizations. }}
\end{figure}

Now, we illustrate Eqs.~(\ref{expo_magic}) and (\ref{eq:relax_rate}) using an analytically tractable physical system. The simplest scenario is the two-qubit case; {see Appendix~\ref{app-a}}. Specifically, we take the unitary to be $U=\exp\{-ic_x\sigma_x\otimes\sigma_x/2\}$, whose non-stabilizing power, as given in Appendix~\ref{app-b}, is $m_p(U)=\sin^2(2c_x)/5$. Incorporating this value in Eq.~(\ref{expo_magic}), the dynamical generation of non-stabilizing power becomes
\begin{eqnarray}\label{ana}
\langle m_{p}(U^{(t)})\rangle_{\tilde{C}}=\overline{m_{p}}\left[ 1-\left( 1-\dfrac{\sin^2(2c_x)}{5\overline{m_{p}}} \right)^{t} \right], 
\end{eqnarray}
where $\overline{m_p}=1-4/7$ denotes the Haar-averaged value in the two-qubit Hilbert space. Choosing the parameter value that maximizes the single-unitary non-stabilizing power, $c_x=\pi/4$, the relaxation rate becomes
\begin{eqnarray}\label{ana2}
\lambda=-\ln\left(1-\dfrac{\sin^2(2c_x)}{5\overline{m_{p}}}\right)\approx 0.6286    \,.
\end{eqnarray}

To confirm Eqs.~(\ref{ana}) and (\ref{ana2}) numerically, we compute $\langle m_{p}(U^{(t)})\rangle_{\tilde{C}}$ by performing an average over $\sim 10^4$ independent realizations of the circuit. The corresponding results are shown in Fig.~\ref{fig:nonstab1}(a), indicated with blue color. The numerical simulations are performed for the first $20$ time steps. The first data point at $t=0$ corresponds to the identity operation. The numerical results in \ref{fig:nonstab1}(a) (blue dots) match the analytical predictions from Eq.~(\ref{ana}) (dashed curve) exactly up to negligible finite sampling fluctuations. Moreover, the numerical simulations yield $\lambda_{\text{num}}\approx 0.6281$, which matches with the analytically obtained exponent in Eq.~(\ref{ana2}). Further discussion on the evolution and equilibration of $\langle m_p(U^{(t)}) \rangle_{\tilde{C}}$ given in Eq.~(\ref{ana}), particularly in the regime where $m_p(U) \ll \overline{m_p}$, is provided in Appendix~\ref{app-F1} and Appendix~\ref{app-F2}.

We also consider another scenario where $ n $ distinct non-Clifford unitaries are applied in a cyclic sequence, interlaced with random Clifford operations. 
In this case, we still expect an exponential relaxation of $\langle m_p(U^{(t)})\rangle_{\tilde{C}}$ over the coarse-grained time scales. From Eq.~\eqref{corrolary}, one obtains the relaxation rate  
\begin{eqnarray}\label{rel2}
\lambda^{'}=-\dfrac{1}{n}\sum_{j=1}^{n}\ln\left( 1-\dfrac{m_p(U_j)}{\overline{m_p}} \right), 
\end{eqnarray}
where $m_p(U_j)<{\overline{m_p}}$ for all $j$.
In Fig.~\ref{fig:nonstab1}(a), we illustrate the behavior of such a cyclic sequence numerically for $n=2$ and $n=3$ (orange and green curves, respectively). For $n=2$, we fix the Euler parameters $c_x=\pi/4$ and $\pi/8$ alternatively. For $n=3$, the parameters are chosen to be $c_x=\pi/4$, $\pi/8$, and $\pi/16$. As in the previous case, the markers indicate the numerical results, while the dashed curves denote the analytically predicted behavior. The inset plot demonstrates the relaxation of $\langle m_p(U^{(t)})\rangle_{\tilde{C}}$ for all the cases considered so far. The results are shown with thick lines having the same color coding as above.
For all three cases, one can see a clear exponential relaxation, which for $n=2,3$ is only modified by a small periodic oscillation. 
For $n=2$, the relaxation rate is $\lambda'_{n=2}\approx 0.4414$, while for $n=3$, it is $\lambda'_{n=3}\approx 0.32$, in agreement with Eq.~(\ref{rel2}). 
These results suggest a rate that decreases with increasing period of application of the non-Clifford {gates}, given that $m_p(U_1)\geq m_p(U_2)\geq m_p(U_3)$. 

{
For the three setups considered in Fig.~\ref{fig:nonstab1}(a), we quantify the fluctuations around
$\langle m_p(U^{(t)}) \rangle$ using the standard deviation $\Delta m_p(U^{(t)})$. The corresponding numerical results are shown in Fig.~\ref{fig:nonstab1}(b). As anticipated from the inequality in Eq.~(\ref{variance_main}), the fluctuations are fully suppressed near $t=0$, where $m_p$ vanishes. As $m_p$ increases, the fluctuations grow, reach a maximum in the intermediate regime, and subsequently decrease. Moreover, at any fixed time, whenever $m_p$ coincides across all cases, the fluctuations appear to coincide. This indicates that the fluctuations are governed primarily by the non-stabilizing power $m_p$ of the unitaries, rather than by other gate-specific properties such as spectral statistics or operator entanglement. This is also evident from the analysis presented in the next subsection, see Fig.~\ref{fig:opspace}. 
 
}
 
For completeness, we also analyse the relaxation dynamics for a four-qubit case. We consider the non-Clifford unitaries of the form $U=\mathbb{I}_{2}\otimes \exp{-ic_x\sigma_x\otimes\sigma_x/2}\otimes\mathbb{I}_{2}$, again with the same set of parameters as before. Although $U$ acts trivially on two of the qubits, the Clifford operators can act on all four of them, thereby propagating the quantum information across the entire system. Through numerical extrapolation (see Appendix \ref{app-d}), we find that this unitary has the non-stabilizing power 
\begin{eqnarray}
m_p\left(\mathbb{I}_{2}\otimes \exp{-i\dfrac{c_x}{2}\sigma_x\otimes\sigma_x}\otimes\mathbb{I}_{2}\right)\approx\dfrac{\sin^2(2c_x)}{4.25}.
\end{eqnarray}
In this case, the Haar-averaged value is given by $\overline{m_p} = 1 - 4/(2^4 + 3)$. The corresponding relaxation dynamics are shown in the inset of Fig.~\ref{fig:nonstab1}(a), indicated by dotted curves with the same color coding as before. The relaxation is still exponential with the rates $\lambda'_{n=1}\approx 0.3539$, $\lambda'_{n=2}\approx 0.2576$ and $\lambda'_{n=3}\approx 0.1866$. These rates are comparatively slower than those observed in the two-qubit case. This is mainly due to the fact that $\sin^2(2c_x)/(5\overline{m}_{p, N=2})\geq \sin^2(2c_x)/(4.25\overline{m}_{p, N=4})$ for any fixed $c_x$, leading to a slower relaxation for $N=4$.

\begin{figure}[h]
\includegraphics[scale=0.35]{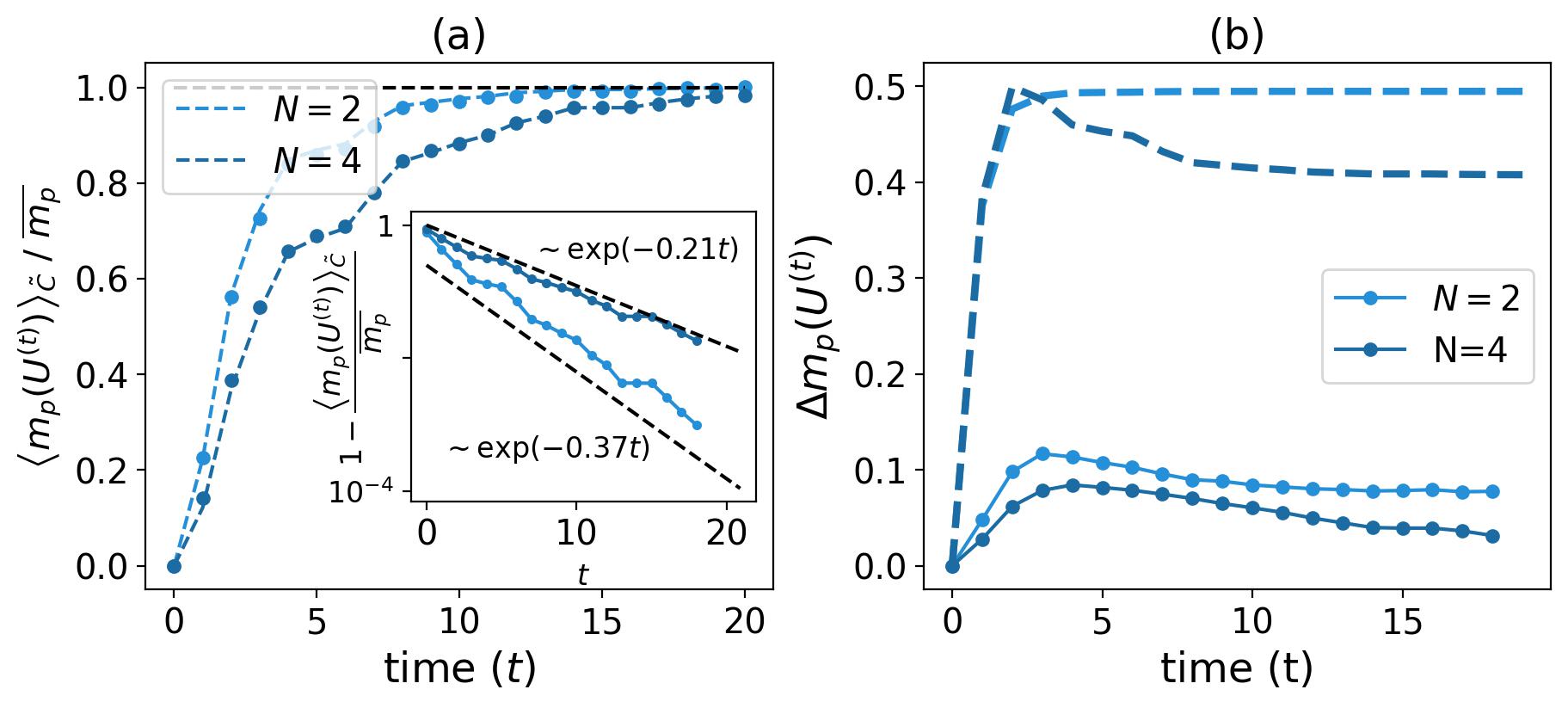}
\caption{\label{fig:nonstab2} Evolution and relaxation of the non-stabilizing power in circuits when non-Clifford unitaries with different values of $m_p$ are interspersed with random Clifford operations, for $N=2$ and $N=4$. For $N=2$, we use the same unitary form as before, i.e., $U=\exp\{-ic_x\sigma_x\otimes\sigma_x\}$, but now the $c_x$ are randomly drawn at every time step. For $N=4$, we use the same two-qubit gates embedded in a four-qubit Hilbert space. As shown in the inset, the overall relaxation remains, to a good approximation, exponential despite fluctuations around the mean values. For the same set of $c_x$ values, the case of $N=4$ shows a slower relaxation rate than the case of $N=2$. The numerical results are carried out for a single realization of the set of random $c_x$ values. {(b) Fluctuations of $m_p(U^{(t)})$ as quantified by the standard deviation $\Delta m_p(U^{(t)})$. Fluctuations become pronounced in the intermediate regime ($m_p\sim 0.7 \overline{m_p}$). In contrast, near $ m_p \sim \overline{m_p}$, the fluctuations decrease with system size, thereby validating the bound presented in the main text. The numerical simulations in both the panels are carried out over $\sim 10^4$ circuit realizations.}} 
\end{figure}

\subsubsection{\textbf{Case-2: Circuits with different non-Clifford operators}}
Here, we consider a more general case of Eq.~(\ref{corrolary}) by studying the interlacing of arbitrary unitaries having completely different non-stabilizing powers with random Clifford operations. 
{From the product structure in Eq.~(\ref{corrolary}), the exponential relaxation rate, over a number $t$ of time-steps follows} 
\begin{eqnarray}\label{avg_lambda}
\lambda_{\text{avg}}&=&-\dfrac{1}{t}\ln\left(1-\dfrac{\langle m_p(U^{(t)})\rangle_{\tilde{C}}}{\overline{m_p}}\right)\nonumber\\
&=&-\dfrac{1}{t}\sum_{j=1}^{t}\ln\left( 1-\dfrac{m_p(U_j)}{\overline{m_p}} \right).
\end{eqnarray}
This expression captures the cumulative effect of individual non-stabilizing powers on the relaxation rate of the final non-stabilizing power. 

For the setting considered here, we demonstrate the exponential relaxation of $\langle m_p(U^{(t)})\rangle_{\tilde{C}}$ using numerical simulations and compare with analytical predictions. The results are shown in Fig.~\ref{fig:nonstab2}(a) for two system sizes $N = 2$ and $4$, as in the previous case. For $N = 2$, we use the same form of the non-Clifford unitary as before, $U = \exp\{-i c_x \sigma_x \otimes \sigma_x /2\}$, with the parameter $c_x$ randomly drawn at each time step from the interval $[0, \pi/2]$. Then, Eq.~(\ref{avg_lambda}) predicts the average relaxation rate over the first $20$ time steps as $\approx 0.367$, which closely matches the numerically obtained value of $\approx 0.37$. 
For $N = 4$, the non-Clifford unitaries are, as above, of the form $U = \mathbb{I}_{2} \otimes \exp\{-i c_x \sigma_x \otimes \sigma_x /2\} \otimes \mathbb{I}_{2}$. Here, the analytical relaxation rate from Eq.~(\ref{avg_lambda}) is $\lambda \approx 0.2137$, which is in excellent agreement with the numerically fitted value of $\lambda\approx 0.21$. 

The corresponding numerical relaxation dynamics are displayed in the inset of Fig.~\ref{fig:nonstab2}(a). We notice that the rate is comparatively smaller than when $N=2$, which can be expected, since for the same gate parameters we found $m_p(U)/\overline{m_p}$ for $N=4$ (see Fig.~\ref{fig:N4magic}) to be smaller than that for $N=2$ (see Fig.~\ref{fig:nonst1}). Inserting these values into Eq.~\eqref{avg_lambda} implies the slower relaxation for $N=4$. {We note} that the numerical results in the current setting are carried out for a single realization of the set of $c_x$ values. 

{Figure~\ref{fig:nonstab2}(b) shows the fluctuations of $ m_p(U^{(t)}) $ for the two cases discussed above, using the same color coding. As anticipated, the fluctuations peak in the intermediate regime, while they vanish near $ t=0 $, corresponding to zero non-stabilizing power. At the opposite extreme, as $ m_p(U^{(t)}) $ approaches its Haar value, the fluctuations again decrease. We further observe that the fluctuations are suppressed with increasing system size. The figure also includes the corresponding bounds from Eq.~(\ref{variance_main}) for both cases, shown as dashed curves with the same color coding. Clearly, these bounds decrease with system size, and we expect that in the limit $ N \to \infty $, the bound vanishes in the vicinity of $m_p \sim \overline{m_p}$. }

\subsection{Operator-space non-stabilizing power}
\label{sec-impact-b}

The result in Theorem \ref{theo1} also provides a framework for studying the operator-space non-stabilizing power (OSNP) of quantum evolutions. The OSNP quantifies the average amount of non-stabilizing power that a unitary $U$ introduces into a random Clifford operator $C$ as it evolves under $U$ in the Heisenberg picture. 
That is, it characterizes how Clifford operators develop non-stabilizing properties over time. 
This can be formulated by replacing $V$ with $U^{\dagger}$ in Theorem \ref{theo1}. Noting that $m_{p}(U^{\dagger}) = m_{p}(U)$, we obtain for the OSNP of $U$ 
\begin{eqnarray}\label{opspacestab}
\text{OSNP}(U)&=&\langle m_{p}(U^{\dagger}CU)\rangle_{C} \nonumber\\
&=& m_{p}(U) \left( 2 - \dfrac{m_{p}(U)}{\overline{m_{p}}} \right).
\end{eqnarray}  
This equation captures how a given unitary transformation influences Clifford operators on average. 
Moreover, Eq.~(\ref{opspacestab}) quantifies the OSNP as a function of $m_{p}(U)$ itself. As discussed below Eq.~\eqref{sameU}, when $m_{p}(U)<\overline{m_{p}}$, we have $\text{OSNP}(U)>m_{p}(U)$. In contrast, if $m_{p}(U)>\overline{m_{p}}$, it follows that $\text{OSNP}(U)<m_{p}(U)$. 

We numerically demonstrate Eq.~(\ref{opspacestab}) for the Ising model with the following Hamiltonian:  
\begin{eqnarray}  
H = \sum_{j=0}^{N-1} \sigma^j_x \otimes \sigma^{j+1}_x + h_x \sum_{j=0}^{N-1} \sigma_x + h_y \sum_{j=0}^{N-1} \sigma_y,  
\end{eqnarray}  
under periodic boundary conditions. We consider two parameter sets, $(h_x, h_y) = (0.8090, 0.9045)$, representing chaotic dynamics~\cite{kim2014testing, varikuti2024unraveling}, and $(0,1)$, corresponding to integrable dynamics. {We keep the system size fixed at $N=8$}. Figure~\ref{fig:opspace}(a) shows the numerical results for $\langle m_{p}(U^{\dagger}(t)CU(t))\rangle_{C}$ as a function of time $t$, with $U(t)=\exp(-i H t/\hbar)$. 
The numerical simulations indicate that the OSNP rapidly approaches the Haar-averaged value $\overline{m_p}$ for both chaotic and integrable parameter regimes. Interestingly, the rate of growth in the integrable case deviates only slightly from that of the chaotic case. The insets show the relaxation dynamics of the OSNP on a semi-log scale by computing the quantity $1-\text{OSNP}(U)/\overline{m_p}$. {For the system size considered, the chaotic case relaxes with a rate of approximately $84$, compared to about $64$ in the integrable case, highlighting a quantitative separation between the two. Figure~\ref{fig:opspace}(b), on the other hand, shows the OSNP fluctuations over the Clifford group. The numerical results are consistent with those of the previous subsection, where a peak appears in the intermediate regime. Moreover, we observe no noticeable difference between the chaotic and integrable cases. This behavior can be understood from the earlier observation that this variance depends primarily on $m_p(U)$ rather than on other specific features of the dynamics. However, we note that the behavior of non-stabilizerness in ground states, as well as under quench protocols, may distinguish between integrable and chaotic Hamiltonian dynamics \cite{odavic2024stabilizer, viscardi2025interplay}. }

\begin{figure}
\includegraphics[scale=0.35]{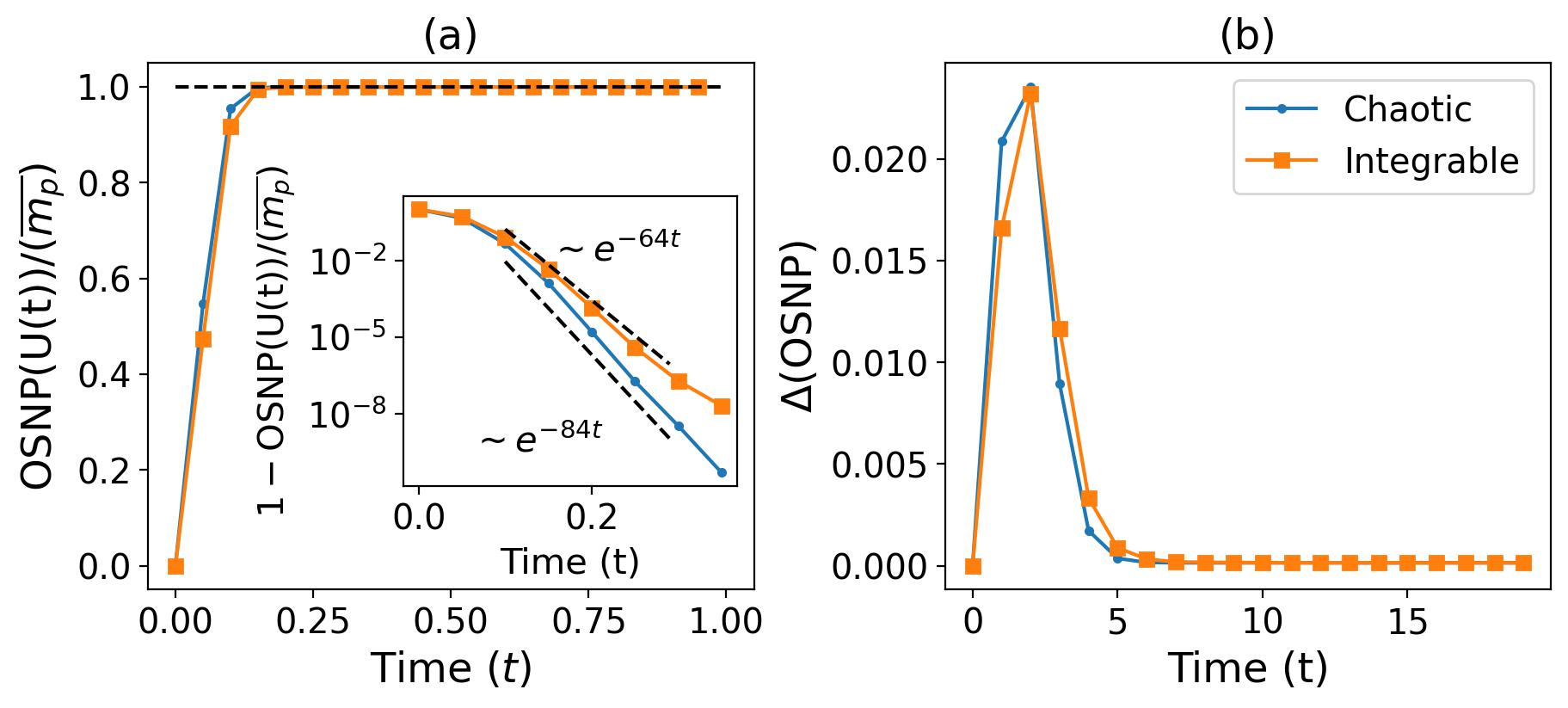}
\caption{\label{fig:opspace} {(a) The operator-space non-stabilizing power (OSNP) of $U(t)=\exp(-i H t/\hbar)$, where $H$ is the Hamiltonian of an Ising chain, for two different parameter regimes---chaotic (blue) and integrable (orange). The chain length considered is $N=8$. For the chaotic regime, the parameters are fixed at $h_x=0.8090$ and $h_y=0.9045$, and for the integrable case at $h_x=0$ and $h_y=1$. 
The data is normalized by the Haar averaged value $\overline{m_p}$. In both regimes, the OSNP rapidly approaches the Haar-averaged values, as indicated by the horizontal dashed line. Insets: the quantity $1-\text{OSNP}(U(t))/\overline{m_p}$ reveals an exponential relaxation for both parameter ranges. While the relaxation rate for the chaotic chain is found to be $\approx 84$, the corresponding rate in the integrable case is $\approx 64$, which is noticeably smaller. (b) Fluctuations in the OSNP, quantified by the standard deviation of the OSNP over the Clifford group. As predicted in the previous subsections, these fluctuations are significant in the intermediate regime for both cases. We observe no noticeable difference between the two cases. The numerical simulations are carried out for $\sim 10^3$ samples of the Clifford instances.
}}
\end{figure}

It may seem surprising that while the periodic boundary conditions imply translation symmetry, the OSNP still approaches its Haar-averaged value. This can be understood from the following reasoning. Suppose that as $ t \to \infty $, the non-stabilizing power of the time-evolution operator satisfies  $m_p(U(t)) = \overline{m_p} - \delta$
where $ 0 < \delta \ll \overline{m_p} $ accounts for symmetry constraints. Then, from Eq.~(\ref{opspacestab}), we obtain  
\begin{align}
\text{OSNP}(U) 
= \overline{m_p}\left[ 1- \left(\dfrac{\delta}{\overline{m_p}}\right)^2 \right]\,.
\end{align}
The above equation implies that if $ m_p(U(t)) $ itself deviates from $ \overline{m_p} $ by $ \delta $, the corresponding OSNP differs only by a significantly smaller second-order correction $\sim \delta^2 $. This ensures that the OSNP remains effectively indistinguishable from the Haar-averaged value despite the translation invariance of the system.  

Thus far, our focus has been on the impact of random Clifford operators on the non-stabilizing power and its thermalization toward the Haar-averaged value in generic quantum circuits. The next section turns to how this quantity, in conjunction with entangling power and gate typicality, governs the onset of quantum chaos. This interplay reveals that non-stabilizing power plays a critical role in the chaotic properties of quantum circuits.

\section{quantum chaos transitions in brick-wall Floquet circuits} 
\label{sec-quant-chaos}

In this section, we analyze the roles played by non-stabilizing power ($m_p$), entangling power ($e_p$), and gate-typicality ($g_t$) in the emergence of quantum chaos. To do so, we construct Floquet circuits using two-qubit gates as given in Eq.~\eqref{cartan} (see Appendix \ref{app-a}) up to random single-qubit Clifford operations, arranged in a brick-wall architecture. Floquet evolution has been widely used to study regular and chaotic properties of both single and many-body quantum systems \cite{bertini2018exact, kos2018many, bertini2019exact, chan2018spectral, lerose2021influence, dileep2024, varikuti2022out, srivastava2016universal, nag2014dynamical, dileep2024, PhysRevB.109.094207}. Recent studies suggest that random Clifford circuits over $ N $-qubits become quantum chaotic when interspersed with at least $ \Theta(N) $ non-Clifford resources such as $ T $-gates \cite{leone2021quantum}. However, understanding the role of the non-stabilizing power of individual gates in inducing quantum chaos remains an outstanding issue. Here, we demonstrate that quantum chaos in Floquet circuits arises from an intriguing interplay of $ e_p $, $ m_p $, and $g_t$ of the involved gates. Conversely, we demonstrate that quantum chaos can arise with minimal randomness, generated solely through single-qubit random Clifford operations. Our findings can have practical implications for quantum technologies that leverage or are affected by the presence of quantum chaos. These include NISQ-era quantum computations \cite{preskill2018quantum}, quantum simulations \cite{hauke2012can, PRXQuantum.1.020308, PhysRevResearch.3.033145, sahu2022quantum, heyl2019quantum, sieberer2019digital, chinni2022trotter,sauerwein2023engineering,uhrich2023cavity,baumgartner2024quantum, Kargi:21}, state tomography \cite{smith2013quantum, merkel2010random}, information recovery \cite{hayden2007black, yoshida2017efficient}, and the construction of unitary and state designs \cite{harrow2009random, brown2010convergence, cotler2023emergent, choi2023preparing, varikuti2024unraveling}
for randomized benchmarking \cite{benchmarking1, knill2008randomized, benchmarking2}, measurements \cite{vermersch2019probing, elben2023randomized}, and more.

{

To begin with, let us recall that any two-qubit unitary operator $U\in \text{SU}(4)$ can be written in the canonical form (usually referred to as the Cartan parametrization) using the Euler angles $\{c_x, c_y, c_z\}$ up to left and right multiplication of arbitrary single-qubit unitaries as \cite{khaneja2001time, kraus2001optimal, zhang2003geometric, bertini2019exact}
\begin{equation}\label{cartan-main}
U=\exp\left\{-i\sum_{j\in\{x, y, z\}}\dfrac{c_j}{2}\left(\sigma_{j}\otimes\sigma_{j}\right)\right\},
\end{equation}
where $c_j\in (0, \pi]$ and $\{\sigma_j\}_{j\in\{x, y, z\}}$ are the Pauli operators. Imposing local equivalence, i.e., requiring that two unitaries related by local transformations share the same set of Euler angles, restricts the ranges of the Euler angles to $ 0\leq c_z \leq c_y \leq c_x \leq \pi/2 $ \cite{zhang2003geometric}. A schematic illustrating the space of two-qubit unitaries embedded inside a tetrahedron is shown in Fig.~\ref{fig:schem}. The vertices of this geometry are locally equivalent to well-known two-qubit Clifford unitaries: Identity ($c_x,c_y,c_z=0, 0, 0$), CNOT ($\pi/2, 0, 0$), double-CNOT or \hbox{DCNOT} also known as iSWAP ($\pi/2$, $\pi/2$, 0), and SWAP ($\pi/2$, $\pi/2$, $\pi/2$) gates. When the single-qubit unitaries are Clifford gates, the tetrahedron’s vertices correspond to the two-qubit Clifford gates. An extensive treatment of $e_p$ and $g_t$ for these edges has been carried out in Ref.~\cite{jonnadula2020entanglement}. For the sake of completeness, we provide their explicit expression:
\begin{align}\label{epgt}
e_p(U)&= \dfrac{2}{3}\sum_{\mathrm{cyc}(x,y,z)} \sin^2 c_x \cos^2 c_y \,,\nonumber\\
g_t&=\sum_{i\in\{x, y, z\}} \sin^2(c_i). 
\end{align}
Appendix~\ref{app-b} presents the analysis of the corresponding $m_p$ along several edges.
}

Having outlined the two-qubit gate parametrization, we construct the Floquet circuits using gates selected from the edges of the tetrahedron shown in Fig.~\ref{fig:schem}. The Floquet operator for a single time step is pictorially given by
\begin{eqnarray}  
\mathcal{U} = \raisebox{-1.6cm}{\includegraphics[scale=0.15]{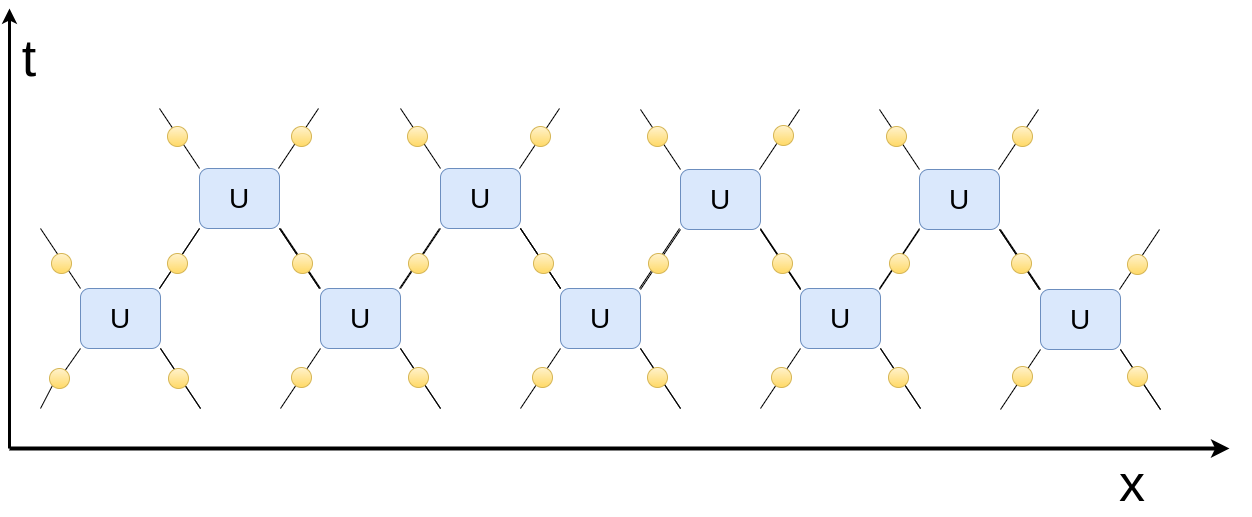}}  \label{eq:Floquet_unitary}
\end{eqnarray}  
where lines represent qubits (arranged for illustration along a spatial dimension $x$), $ U$ denotes a two-qubit gate with fixed Euler angles from Eq.~\eqref{cartan-main}, and yellow circles indicate random and independent single-qubit Clifford operations. To quantify quantum chaos in the above setting, we compute the average adjacent level-spacing ratio $\langle r \rangle$ of the Floquet circuit. This quantity is defined as \cite{oganesyan2007localization, atas2013distribution}
\begin{equation}
\langle r \rangle =\text{Avg}\left\{r_i\right\}_{i=1}^{2^N-2},\thickspace \textrm{where}\hspace{0.2cm} r_i=\dfrac{\min(d_i, d_{i+1})}{\max(d_i, d_{i+1})}, 
\end{equation}
and $d_i=\phi_{i+1}-\phi_{i}$ denotes the $i$-th eigenphase spacing, where $\phi_{i}$ are the eigenphases of the Floquet operator $\mathcal{U}$ depicted in Eq.~\eqref{eq:Floquet_unitary}. For sufficiently chaotic circuits, this quantity approaches the value of a completely Haar-random unitary, $\langle r\rangle_{\text{CUE}}=0.596543$ \cite{d2014long}. Here, CUE corresponds to circular unitary ensembles of complex random matrices \cite{dyson1962threefold}. In contrast, if the circuit is regular, $\langle r\rangle$ matches the value of the Poisson ensemble $\langle r\rangle_{\text{Poisson}}=0.386294$ \cite{d2014long}. As we encompass different edges of the tetrahedron sketched in Fig.~\ref{fig:schem}, the quantities $m_p$, $e_p$, and $g_t$ become the three control parameters against which the chaotic nature of the circuit can be examined. In Figs.~\ref{fig:magivsR}  and \ref{fig:magivsR1}, we present the trend of $\langle r\rangle$ along the edges, which we discuss in detail below. 
Among the considered edges, Id -- CNOT stands out, as the Floquet operators constructed from gates along this edge exhibit many-fold degeneracies, requiring a modified treatment through multiple brick-wall layers.

\begin{figure} 
\includegraphics[scale=0.35]{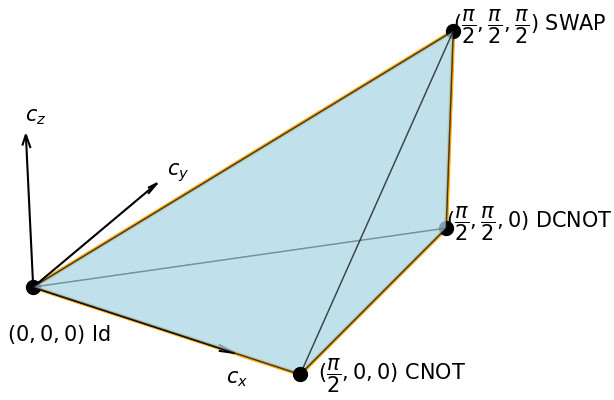}
\caption{\label{fig:schem} 
Space of two-qubit unitaries, illustrated as a tetrahedron parametrized by the Euler angles $c_x$, $c_y$, and $c_z$. 
Each point inside the tetrahedron uniquely determines the two-qubit unitaries up to single-qubit unitaries with $c_j\in[0, \pi/2)$ for all $j\in\{x, y, z\}$. In this work, we consider the unitaries that lie along the edges Id --- CNOT, CNOT --- DCNOT, SWAP --- DCNOT, and Id --- SWAP (or $S^{\alpha}$, the fractional powers of SWAP). These edges are marked with the orange color in the figure. } 
\end{figure}

\begin{figure*} 
\includegraphics[scale=0.365]{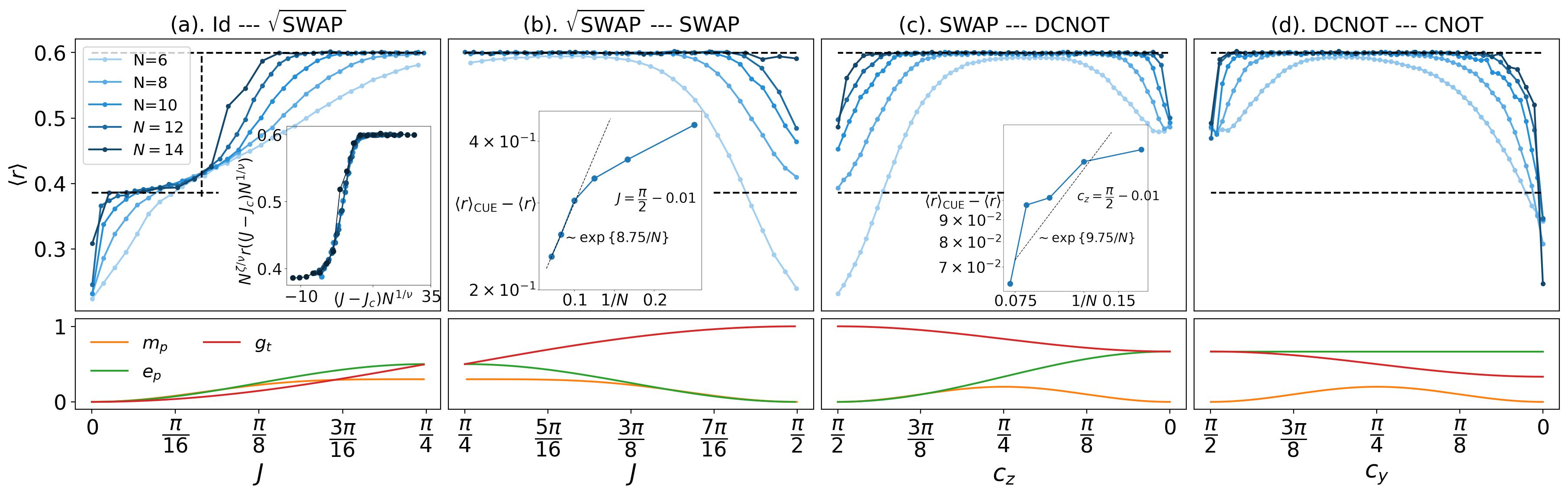}
\caption{\label{fig:magivsR} Emergence of quantum chaos in brick-wall Floquet unitaries constructed from two-qubit gates lying along the edges of the tetrahedron in Fig.~\ref{fig:schem}, analyzed through the average adjacent level spacing ratio, $ \langle r \rangle $. The lower panels denote the behaviour of non-stabilizing power $m_p$ (see also Fig.~\ref{fig:nonst1}), entangling power $e_p$, and gate typicality $g_t$ of the corresponding two-qubit unitaries considered in the above panels.   
(a) Id --- $\sqrt{\mathrm{SWAP}}$, 
(b) $\sqrt{\mathrm{SWAP}}$ --- SWAP, 
(c) SWAP --- DCNOT, 
(d), DCNOT --- CNOT. 
(a) Near $J\approx 0.26$, we notice a critical transition from Poisson to Wigner--Dyson statistics. The inset demonstrates the finite-size scaling analysis for the transition, yielding the critical point $J_c\approx 0.26$ and critical exponents $\nu \approx 0.65$ and $\zeta \approx 0.0014$. Here, $m_p$, $e_p$, and $g_t$ all grow monotonically from zero as shown in the bottom panel. 
(b) For small systems $\langle r\rangle$ is close to Poisson statistics as $J$ reaches $\pi/2$. The inset illustrates the behaviour of $\langle r\rangle$ near a point that is close to $J=\pi/2$, suggesting the system reaches Wigner-Dyson statistics in the thermodynamic limit as soon as $J$ deviates from exactly $\pi/2$. 
(c) The system experiences a quantum chaos transition as the Euler angle is perturbed away from the end points $c_z=\pi/2$ and $0$. In the vicinity of $c_z=\pi/2$ (SWAP), the behaviour of $\langle r\rangle$ is almost identical to that observed in panel (b) near $J=\pi/2$ (SWAP). The inset illustrates the behaviour of $\langle r\rangle$ near a point that is close to $c_z=0$. 
(d) Similarly, the system experiences a chaos transition at the endpoints $c_y=\pi/2$ and $0$. The behaviour of $\langle r\rangle$ near $c_y=\pi/2$ (DCNOT) closely mirrors that of panel (b) near $c_z=0$ (DCNOT). The data in the figure is shown for $N=6$, $8$, $10$, $12$, and $14$. Dashed lines represent $\langle r\rangle\approx 0.39$ and $\approx 0.60$, corresponding to regular and chaotic statistics, respectively. { For system sizes $N=6,8,$ and $10$, the numerical simulations were performed over approximately $500$ independent samples. For $N=12$, we used about $175$ samples, while for $N=14$ we averaged over $15$ independent circuit realizations.}} 
\end{figure*}

{
Before presenting a detailed spectral analysis of the Floquet circuits, we first summarize their generic behavior across several edges, namely, Id --- SWAP --- DCNOT --- CNOT --- Id. Through rigorous numerical analysis provided in Figs.~\ref{fig:magivsR} and ~\ref{fig:magivsR1}, we observe that quantum chaos is suppressed when at least one of the gate characteristics, $m_p$, $e_p$ and $g_t$, vanishes, even if the other two are considerably large \footnote[2]{A notable case is when the gates have vanishing $e_p$ and $g_t$, which occurs when they are locally equivalent to the Identity operation. The emergence of Poisson statistics has been rigorously proved for such Floquet circuits when the single-qubit gates are Haar random \cite{tkocz2012tensor}. Interestingly, in this case, $ m_p $ of individual gates can attain a nearly maximal value.}. In particular, we use Eq.~\eqref{epgt} in conjunction with the results from Appendix~\ref{app-b} to understand the emergence of quantum chaos from the interplay of these gate characteristics. For example, two-qubit gates located at the vertices of the tetrahedron, with local operations restricted to single-qubit Clifford gates, have vanishing non-stabilizing power. Spectral statistics for these special cases indicate that quantum chaos is fully suppressed. Quantum chaos can emerge as the gates are tuned away from the vertices when all three gate characteristics are simultaneously non-zero. However, when at least two of the characteristics are close to zero, we observe a critical transition from regular to chaotic behavior [see Figs.~\ref{fig:magivsR}(a) and ~\ref{fig:magivsR1}]. We observe this behavior along the edges Id --- SWAP and Id --- CNOT. In the following, we discuss these features in detail for all the aforementioned edges. 

}

\subsection{Edge Id --- SWAP}
\label{r_idswap}

We first consider the edge connecting the gates that are equivalent (up to single-qubit Clifford gates) to Identity and SWAP operations. Along this edge, the two-qubit gates are parametrized with a single parameter $c_x=c_y=c_z=J$, where  $J$ varies from $0$ to $\pi/2$. {The end points at $J=0$ and $\pi/2$ correspond to Identity and SWAP, i.e., Clifford gates.} Consequently in these two limits, the circuit in Eq.~\eqref{eq:Floquet_unitary} becomes an $N$-qubit Clifford operator and displays regular behavior with many fold degeneracies in the eigenphases. To explore the quantum chaotic nature away from these two points, we vary $J$ within two distinct ranges, $0+\epsilon\leq J\leq\pi/4$ and $\pi/4\leq J\leq\pi/2-\epsilon$, respectively, where we choose $\epsilon = 0.001$. At $J=\pi/4$, the gates are locally equivalent to $\sqrt{\text{SWAP}}$. Considering the above two ranges of $J$ separately allows us to reveal an interesting regular to chaos transition as detailed below.

The trend of $\langle r \rangle$ for the considered ranges of $J$ is presented in the upper panels of Figs.\ref{fig:magivsR}(a) and \ref{fig:magivsR}(b), respectively. 
As it can be seen from Fig.~\ref{fig:magivsR}(a), the curves for different system sizes cross near $J\approx 0.26$, indicating a transition from the regular to the quantum-chaotic regime. Through finite-size scaling analysis, we obtain the critical exponents $\nu \approx 0.65$ and $\zeta \approx 0.0014$, yielding an excellent collapse of the data for different system sizes [inset in Fig.~\ref{fig:magivsR}(a)].  
 
At the endpoint of this edge, we notice a chaotic to regular transition of the Floquet circuit as $J$ approaches $\pi/2$ [see Fig.~\ref{fig:magivsR}(b)]. However, the size dependence of the data suggests that in the thermodynamic limit, the transition will occur sharply at $J=\pi/2$. It is also worthwhile to note that $\langle r\rangle$ attains the CUE value rapidly with increasing system size in the vicinity of $J = \pi/2$. This behavior is illustrated in the inset of Fig.~\ref{fig:magivsR}(b), where we plot $\langle r\rangle_{\text{CUE}}-\langle r\rangle$ versus $1/N$ for $J=\pi/2-0.01$ on a semi-log scale. We observe that $\langle r\rangle$ approaches $\langle r\rangle_{\text{CUE}}$ exponentially with increasing $N$.

We can compare these findings with the behavior of $m_p$, $e_p$, and $g_t$ of the two-qubit gates, shown in the lower panels of Fig.~\ref{fig:magivsR} (also see Appendix~\ref{app-b}).  
Along this edge, $e_p$ and $g_t$ vary as $2\sin^2(J)\cos^2(J)$ and $\sin^2(J)$, respectively \cite{jonnadula2020entanglement}. While $g_t$ increases monotonically and reaches its maximum at $J = \pi/2$,  $e_p$ and $m_p$ grow upto $J=\pi/4$, after which they decrease and vanish at $J=\pi/2$. We can make two observations: First, when some of $m_p$, $e_p$, and $g_t$ vanish exactly, the Floquet level-spacing statistics indicates regular behavior. Second, the extended regular regime appears in a parameter space where all three of $m_p$, $e_p$, and $g_t$ are small. In contrast, when one of these quantities is large and the others are non-zero, the level-spacing statistics corresponds to a chaotic circuit. In what follows, we analyze the other edges, finding analogous behavior.

\subsection{Edges SWAP --- DCNOT --- CNOT }

In Fig.~\ref{fig:magivsR}(c), we examine Floquet circuits comprising the gates along the SWAP--DCNOT edge, parameterized by $c_x = c_y = \pi/2$ and a variable $c_z\in [\epsilon,\pi/2-\epsilon]$. 
For $c_z \lesssim \pi/2$, where the gates approach SWAP, the behavior of $\langle r \rangle$ is very similar to the approach to the end of the Id--SWAP edge analysed above. Accordingly, we expect a sharp transition in the thermodynamic limit from chaotic to regular dynamics at $c_z = \pi/2$ as observed in Fig.~\ref{fig:magivsR}(b). Likewise, at small $c_z$ where the gates approach DCNOT, our data indicates a sharp transition from chaos to regularity, as highlighted in the inset of Fig.~\ref{fig:magivsR}(c). For a fixed system size, convergence to CUE statistics is faster near the DCNOT gate than near the SWAP gate. {One can observe a similar behavior when the gates are taken from the DCNOT–CNOT edge with the parametrization $(\pi/2, c_y, 0)$, as illustrated in Fig.~\ref{fig:magivsR}(d).}

It is again instructive to compare the behavior of {$\langle r\rangle$} to that of $m_p$, $e_p$, and $g_t$. {The non-stabilizing power along these two edges varies as $m_p=\sin^2(2c_i)/5$ with $c_i=c_z$ for SWAP-DCNOT and $c_i=c_y$ for DCNOT-CNOT [see Eqs.~\eqref{eq:mp_swap-dcnot} and ~\eqref{eq:mag_cnot_dcnot}]. On the other hand, $e_p$ and $g_t$ vary in accordance with Eq.~\eqref{epgt}} --- $e_p$ vanishes only at the SWAP vertex, while $g_t$ remains non-zero throughout the edges. Therefore, excepting the vertices, all three gate characteristics are non-zero and at least one of them (in particular $g_t$) remains significantly large. This behavior is consistent with Fig.~\ref{fig:magivsR}(b), where quantum chaos emerges when $m_p$, $e_p$, and $g_t$ are all significantly large, reinforcing their connection to chaotic dynamics. {Moreover, the circuits near DCNOT consistently display stronger chaotic signatures than those near Id, SWAP, and CNOT.} Our results suggest that if more of these quantities vanish, the finite-size tendency towards regular behavior is more pronounced.




\subsection{Id --- CNOT edge} 
\begin{figure}[h!]
\includegraphics[scale=0.475]{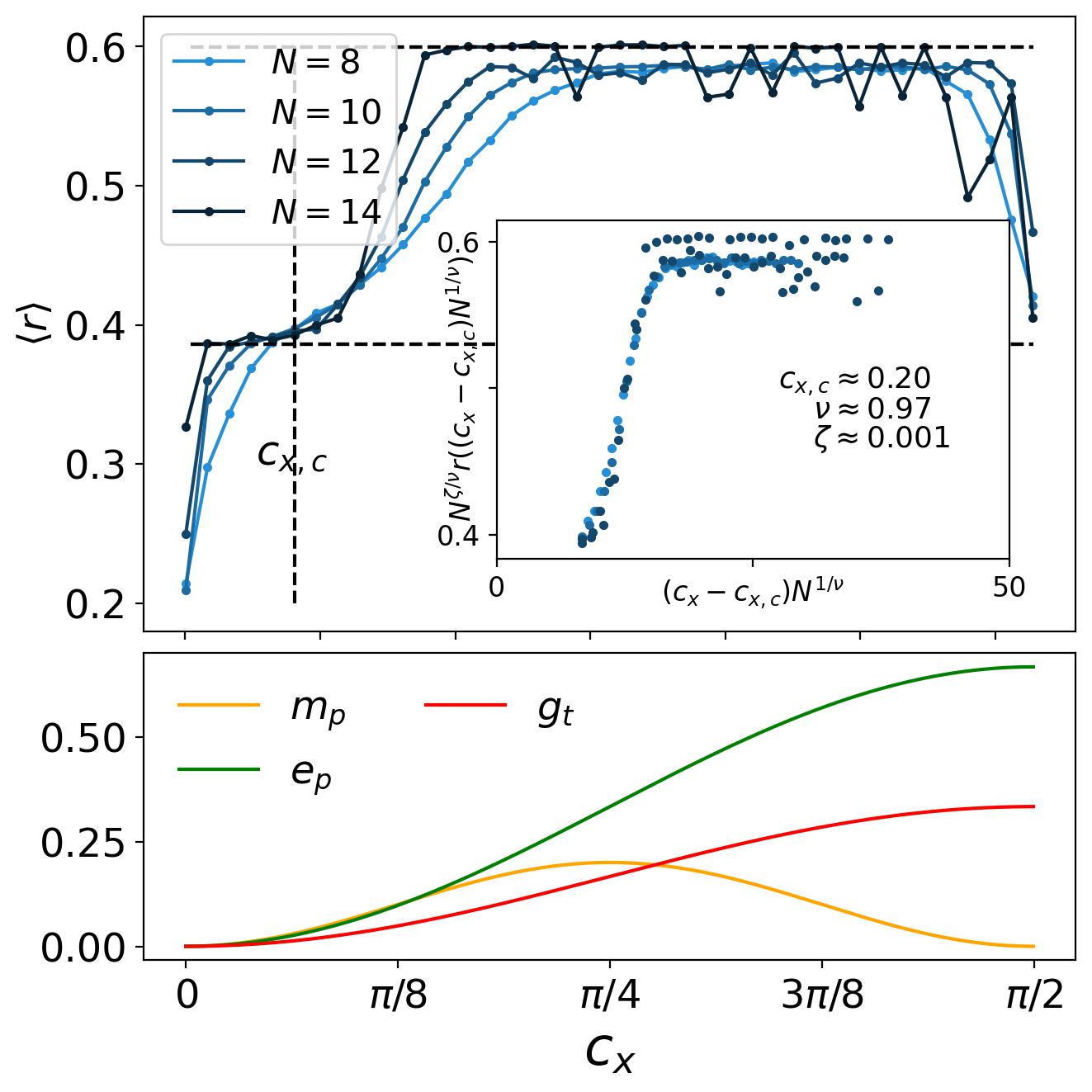}
\caption{\label{fig:magivsR1} Emergence of quantum chaos in the Floquet circuit comprising three brick-wall layers of the two-qubit gates along the edge Id --- CNOT, illustrated using $\langle r\rangle$ for different system sizes. Here, Euler angle $c_x$ is varied from $0+\epsilon$ to $\pi/2-\epsilon$, where $\epsilon =0.001$, while $c_y=c_z=0$. Near $c_x\approx 0.20$, curves corresponding to different system sizes cross, indicating a critical transition from regular to quantum chaotic regime. Inset: finite-size scaling analysis and data collapse for the critical transition. The observed critical exponents are $\nu\approx 0.97$ and $\zeta\approx 0.001$. Towards the endpoint at $c_x=\pi/2$, the chaotic behavior of the Floquet circuits at finite system sizes is suppressed. The lower panel demonstrates the behaviour of the quantities $m_p$, $e_p$, and $g_t$. At the endpoint $c_x=0$, all three quantities vanish, where it is known that quantum chaos is suppressed. As long as all the quantities remain small, the system remains in a regular phase, and transitions into a chaotic regime only once they are sufficiently large.  
This corroborates our observation in the case of the Id --- SWAP edge. Conversely, at the other endpoint $c_x=\pi/2$, $e_p$ attains its maximum value, $g_t$ remains considerably large, and $m_p$ vanishes. Quantum chaos is suppressed at finite system sizes, but the data suggests regular behavior to survive only at exactly the endpoint in the thermodynamic limit. For $N=8$ and $10$, we sample approximately $10^3$ circuit realizations, whereas for $N=12$ and $14$, we sample over approximately $125$ and $10$ realizations, respectively. } 
\end{figure}

Here, we carry out the spectral analysis of Floquet circuits that have gates from the edge connecting the identity to the CNOT operation. Along this edge, one of the three Euler angles, let us say $ c_x $, varies from $ 0 $ to $ \pi/2 $, while the other two parameters remain fixed at zero. Similar to the previous cases, the local operations are randomly chosen single-qubit Clifford gates. To probe quantum chaos, we compute $ \langle r \rangle $, averaged over many realizations of the Floquet operator as $ c_x $ is slowly varied. In contrast to the above edges, we observe that single-step Floquet circuits, defined in Eq.~(\ref{eq:Floquet_unitary}), exhibit many-fold degeneracies in their quasi-eigenspectra. However, $\langle r\rangle$ is not properly defined for degenerate spectra. To circumvent this issue, we introduce multiple brick-wall layers within a single time step. This adjustment ensures that the degeneracies are lifted and allows the dynamics to accurately predict the behavior associated with the considered gates. The corresponding numerical results are presented in Fig.~\ref{fig:magivsR1}.

In the figure, $\langle r\rangle$ is plotted against $c_x$ for the same system sizes as in the previous cases. 
In the numerical simulations, three consecutive layers of Eq.~(\ref{eq:Floquet_unitary}) are treated as a single time step. For small $c_x$, $\langle r\rangle$ starts below the value $\langle r\rangle_{\text{Poisson}}\approx0.386294$ due to nearly degenerate spectra. As $c_x$ is tuned away from zero, the curves corresponding to different system sizes intersect near $c_{x, c} \approx 0.20$, signaling a critical transition from regular to quantum chaotic behavior. The inset in Fig.~\ref{fig:magivsR1} shows the finite size scaling analysis and the corresponding data collapse for various system sizes, along with the extracted critical exponents $\nu = 0.97$ and $\zeta = 0.001$. Beyond the critical point, $\langle r\rangle$ remains close to the CUE value. However, as $c_x$ approaches the endpoint $\pi/2$, the spacing statistics begin to deviate from the CUE prediction, indicating a suppression of quantum chaos at the endpoint.

In line with our findings for the other edges, we observe a strong correlation between $\langle r\rangle$ and the gate properties $m_p$, $e_p$, and $g_t$. Along this edge, these quantities take the forms $m_p = \sin^2(2c_x)/5$, $e_p = 2\sin^2(c_x)/3$, and $g_t = \sin^2(c_x)/3$. At $c_x = 0$, the circuit is completely separable and exhibits no chaotic behavior. However, near the critical point $c_x \approx 0.3046$, all of these quantities remain positive but very small. This pattern corroborates our earlier observations in Sec.~\ref{r_idswap}, where we noticed a similar transition when all the gate invariants remain very small. Beyond the critical point, $e_p$ and $g_t$ increase monotonically, with $e_p$ approaching its maximal value of $2/3$. In contrast, $m_p$ exhibits a symmetric profile around $c_x = \pi/4$ and vanishes at $c_x = \pi/2$. The numerical results indicate that chaotic behavior is gradually suppressed (in finite system sizes) as $c_x$ approaches $\pi/2$. These findings are fully consistent with those from other edges: in all considered cases, regular dynamics emerges only when either any one of $m_p$, $e_p$, or $g_t$ vanishes exactly or if all three are small. Conversely, if all three are non-vanishing and any of these is sufficiently large, the system is in a quantum chaotic regime.

\section{Summary and Discussion}
\label{sec-conclusion}

In summary, we have investigated (i) non-stabilizing properties of quantum circuits consisting of interlaced Clifford and non-Clifford operations, and (ii) the emergence of quantum chaos in brick-wall Floquet circuits due to the interplay of non-stabilizing power ($m_p$), entangling power ($e_p$), and gate-typicality ($g_t$). To lay the groundwork to address these two aspects, we have analyzed the non-stabilizing properties of two-qubit gates using their standard parametrization via the Cartan decomposition \cite{khaneja2001time, kraus2001optimal}. Under local unitary equivalence, these gates are confined within a tetrahedron geometry known as the Weyl chamber and parametrized by three Euler angles. Vertices of the tetrahedron correspond, up to single-qubit unitaries, to the two-qubit Clifford gates Identity, CNOT, DCNOT, and SWAP. We have focused on gates along the edges of this tetrahedron and considered the single-qubit unitaries to be Clifford operators. Through analytical arguments and numerical simulations, we have shown that the non-stabilizing power along these edges varies smoothly as a function of the corresponding Euler parameters.

The central focus of this work concerns the evolution of non-stabilizing power when non-Clifford operations are repeatedly applied in conjunction with random and independent Clifford unitaries. These scenarios appear frequently in quantum computing protocols such as randomized benchmarking \cite{magesan2012efficient}. By studying the setting in Fig.~\ref{fig:sch2}, we have pinpointed the exact role played by the random Clifford unitaries in driving the non-stabilizing dynamics. The corresponding result is provided in Theorem~\ref{theo1}. Perhaps surprisingly, this result tells us that $\langle m_p(VCU) \rangle_C$ depends only on $m_p(U)$ and $m_p(V)$ with $\overline{m_p}$ being the only fixed point. As a result, even an arbitrarily small but nonzero amount of non-stabilizerness in a gate $CU$ is sufficient to drive the system towards the typical non-stabilizing power of a Haar-random unitary. Repeating such a gate $t$ times, with independent random Clifford unitaries $C$ at each step, leads to exponential convergence toward $\overline{m_p}$. This behavior reflects the thermalization of non-stabilizing properties in generic quantum circuits. It also provides new insights into the emergence of chaos and the limitations of classical simulability for such circuits. Moreover, an arbitrarily small non-stabilizing unitary can be used to generate magic equivalent to that of T-gates, thereby enabling fault-tolerant quantum computation, as detailed in Appendix~\ref{app-F3}. In addition, Theorem~\ref{theo1} can be used to define the operator-space non-stabilizing power (OSNP), the average non-stabilizing power imprinted onto a random Clifford operation as it is evolved under an arbitrary non-Clifford unitary. We have demonstrated this quantity for unitary time-evolution generated by the Ising Hamiltonian with transverse and longitudinal fields. Interestingly, we have found no remarkable difference in the OSNP for Ising Hamiltonians chosen in the integrable or chaotic regime.

The latter part of this work explores how the non-stabilizing power, along with the entangling power and gate typicality of two-qubit gates, drives brick-wall Floquet circuits toward quantum chaos. 
To this end, we have constructed brick-wall circuits comprising two-qubit gates with fixed Euler angles and random single-qubit Clifford unitaries, thereby ensuring minimal randomness in the system. 
Circuits constructed from two-qubit gates connecting CNOT to DCNOT and DCNOT to SWAP exhibit an immediate onset of quantum chaos as soon as the gates are perturbed away from the limiting Clifford cases. Interestingly, along the edges Id --- SWAP and Id --- CNOT, we observe sharp transitions from regular to chaotic behavior at critical points considerably away from the endpoints.  
In all analyzed cases, quantum chaos remains fully suppressed when at least one of $m_p$, $e_p$, and $g_t$ vanishes or when all of them simultaneously remain close to zero. 
In contrast, when all of these three quantities are sufficiently large, quantum chaos emerges. 
It remains an open problem whether this observation is fully generic and if $m_p$, $e_p$, and $g_t$ contribute equally to the onset of quantum chaos.


A further relevant question for future research is whether the exponential thermalization persists in the presence of external noise, and whether other measures of non-stabilizerness display similar thermalization patterns.
Moreover, the analytical explanation behind the exponential relaxation, derived from Theorem~\ref{theo1}, relies on the fact that the Clifford unitaries are spatially independent. Instead, correlations between the interspersed Clifford unitaries would prevent the decoupling of different $ m_p$'s. It would be interesting to probe if such correlations could suppress the thermalization of $m_p$. Another interesting future direction is to test if the higher-order Renyi entropies follow a similar result as Eq.~(\ref{mainres}).  
The observed connection between non-stabilizerness, entanglement, and chaos also points to possible ways of tuning circuit behavior, for example, to control chaotic dynamics and randomness generation \cite{choi2023preparing}. These ideas may find relevance in benchmarking, error correction, and the development of hybrid quantum algorithms.

\section{Data Availability Statement}
All data supporting the findings of this study are publicly available in the Zenodo repository at \cite{varikuti2026clifford}. 

\begin{acknowledgments}

N.D.V. acknowledges useful discussions with Arul Lakshminarayan and Shraddhanjali Choudhury on entangling power and quantum chaos. 
This project has received funding from the Italian Ministry of University and Research (MUR) through project DYNAMITE QUANTERA2\_00056, in the frame of  ERANET COFUND QuantERA II – 2021 call co-funded by the European Union (H2020, GA No 101017733); the European Union - Next Generation EU, Mission 4, Component 2 - CUP E53D23002240006, and CARITRO through project SQuaSH. This project was supported by the European Union under Horizon Europe Programme - Grant Agreement 101080086 - NeQST; the Swiss State Secretariat for Education, Research and lnnovation (SERI) under contract number UeMO19-5.1; the Provincia Autonoma di Trento, and Q@TN, the joint lab between University of Trento, FBK—Fondazione Bruno Kessler, INFN—National Institute for Nuclear Physics, and CNR—National Research Council.
S.B.\ acknowledges CINECA for use of HPC resources
under Italian SuperComputing Resource Allocation– ISCRA
Class C Projects No. DISYK-HP10CGNZG9 and DeepSYK - HP10CAD1L3. 
Views and opinions expressed are however those of the author(s) only and do not necessarily reflect those of the European Union or of the Ministry of University and Research.
Neither the European Union nor the granting authority can be held responsible for them.

\end{acknowledgments}

\newpage
\onecolumngrid
\appendix

\section{Two-qubit gates and vertices of the Tetrahedron geometry }
\label{app-a}

{
In this appendix, we briefly outline the geometry of two-qubit gates and indicate the locations of Clifford gates within this geometry. As also mentioned in the main text, any two-qubit unitary operator $U\in \text{SU}(4)$ can be written in the canonical form using the Euler angles $\{c_x, c_y, c_z\}$ up to left and right multiplication of arbitrary single-qubit unitaries as
\begin{equation}\label{cartan}
 U=\exp\left\{-i\sum_{j\in\{x, y, z\}}\dfrac{c_j}{2}\left(\sigma_{j}\otimes\sigma_{j}\right)\right\},
 \end{equation}
where $c_j\in (0, \pi]$ and $\{\sigma_j\}_{j\in\{x, y, z\}}$ are the Pauli operators. Imposing local equivalence, i.e., requiring that two unitaries related by local transformations share the same set of Euler angles, restricts the ranges of the Euler angles to $ 0\leq c_z \leq c_y \leq c_x \leq \pi/2 $ \cite{zhang2003geometric}; see Fig.~\ref{fig:schem} of the main text. As we shall show below, the vertices of this geometry are locally equivalent to well-known two-qubit Clifford unitaries: Identity ($c_x,c_y,c_z=0, 0, 0$), CNOT ($\pi/2, 0, 0$), double-CNOT or \hbox{DCNOT} also known as iSWAP ($\pi/2$, $\pi/2$, 0), and SWAP ($\pi/2$, $\pi/2$, $\pi/2$) gates. Note that DCNOT and SWAP gates can be constructed using two and three CNOT gates, respectively, as also illustrated below.

}

We now demonstrate that the vertices of the tetrahedron in Fig.~\ref{fig:schem} correspond to Clifford unitaries, assuming the single-qubit gates involved are also Clifford operations. To establish this, it suffices to show that the corresponding two-qubit unitaries map Pauli strings to Pauli strings, up to overall phase factors. The vertex with all zeros is a trivial case, as it corresponds to the identity operation, up to local single-qubit unitaries.
We now consider the vertex with the Euler angles $(c_x, c_y, c_z)=(\pi/2, 0, 0)$. To show that this vertex corresponds to a Clifford operator, we write the corresponding unitary as 
\begin{eqnarray}
U_{(\pi/2, 0, 0)}=\exp\left\{ -i\dfrac{\pi}{4}\sigma_x\otimes\sigma_x \right\} =\dfrac{1}{\sqrt{2}}\left[ \mathbb{I}_{4}-i\left(\sigma_x\otimes\sigma_x\right) \right]
\end{eqnarray}
In the case of two qubits, there are a total of $16$ Pauli strings, including the identity operator. For an arbitrary Pauli string from this set $\sigma_a\otimes\sigma_b$, the action of $U$ is given by \begin{eqnarray}
U^{\dagger}_{(\pi/2, 0, 0)}\left( \sigma_a\otimes\sigma_b \right)U_{(\pi/2, 0, 0)}=\dfrac{1}{2}\left[ \sigma_a\otimes\sigma_b -i\left[ \sigma_a\otimes\sigma_b, \sigma_x\otimes\sigma_x \right]+\left(\sigma_x\sigma_a\sigma_x\right)\otimes\left( \sigma_x\sigma_b\sigma_x \right) \right].  
\end{eqnarray}
Whenever $a=b$, it is known that $[\sigma_a\otimes\sigma_a, \sigma_x\otimes\sigma_x]=\mathbf{0}$ for all $a\in \{x, y, z\}$. Therefore, $U^{\dagger}_{(\pi/2, 0, 0)}(\sigma_a\otimes\sigma_a)U_{(\pi/2, 0, 0)}=\sigma_a\otimes\sigma_a$. In addition, the strings from the set $\{\sigma_x\otimes\mathbb{I}_{2}, \mathbb{Id}_{2}\otimes\sigma_x, \sigma_y\otimes\sigma_z, \sigma_z\otimes\sigma_y\}$ also commute with $U$. This leaves us with the set of strings $\{ \sigma_y\otimes\mathbb{I}_2, \sigma_z\otimes\mathbb{I}_2, \sigma_x\otimes\sigma_y, \sigma_x\otimes\sigma_z \}$ and their permutations, and we only need to check if these remaining eight strings map among themselves. By explicit calculation, we see that under the mapping of $U$, this set maps back to itself up to phases: 
\begin{eqnarray}
U^{\dagger}_{(\pi/2, 0, 0)}\left( \sigma_y\otimes\mathbb{I}_{2} \right) U_{(\pi/2, 0, 0)}&=&-\left(\sigma_z\otimes\sigma_x\right)\,,\nonumber\\
U^{\dagger}_{(\pi/2, 0, 0)}\left( \sigma_z\otimes\mathbb{I}_{2} \right) U_{(\pi/2, 0, 0)}&=&\sigma_y\otimes\sigma_x\,,\nonumber\\
U^{\dagger}_{(\pi/2, 0, 0)}\left( \sigma_x\otimes\sigma_y \right)U_{(\pi/2, 0, 0)}&=&-(\mathbb{I}_{2}\otimes\sigma_z)\,,\nonumber\\
U^{\dagger}_{(\pi/2, 0, 0)}\left(\sigma_x\otimes\sigma_z\right)U_{(\pi/2, 0, 0)}&=&\mathbb{I}_{2}\otimes\sigma_y\,.
\end{eqnarray}
Therefore, $U_{(\pi/2, 0, 0)}$ is a Clifford operator. Moreover, it is known that $\text{CNOT}=\exp\{i\frac{\pi}{4}((\mathbb{I}-\sigma_z)\otimes\sigma_x)\}$, which is equivalent to $U_{(\pi/2, 0, 0)}$ up to local operations. Hence, we denote $U_{(\pi/2, 0, 0)}=(\text{CNOT})_{\text{local}}$. 

We now consider the other two vertices and verify that the corresponding unitaries are again Clifford unitaries. At the vertex ($\pi/2, \pi/2, 0$), the unitary can be written as 
\begin{eqnarray}
U_{(\pi/2, \pi/2, 0)}&=&\exp\left\{ -i\dfrac{\pi}{4}\left( \sigma_x\otimes\sigma_x +\sigma_y\otimes\sigma_y \right) \right\}\nonumber\\
&=&\text{(CNOT)}_{\text{local}}\exp\left\{ -i\dfrac{\pi}{4}\sigma_y\otimes\sigma_y \right\}\nonumber\\
&=&(\text{CNOT})_{\text{local}}\left( S^{\dagger}\otimes S^{\dagger} \right)\exp\left\{ -i\dfrac{\pi}{4}\sigma_x\otimes\sigma_x \right\}\left( S\otimes S \right)\nonumber\\
&=&(\text{CNOT})_{\text{local}}\left( S^{\dagger}\otimes S^{\dagger} \right)(\text{CNOT})_{\text{local}} \left( S\otimes S \right), 
\end{eqnarray}
where $S$ is the phase gate. Since the final expression is a product of two locally equivalent CNOTs, $U_{(\pi/2, \pi/2, 0)}$ is again a Clifford operator. In the same way, one can show that the unitary corresponding to $(\pi/2, \pi/2, \pi/2)$ is also a Clifford operation and is equivalent to the SWAP gate up to a constant phase:
\begin{eqnarray}
U_{(\pi/2, \pi/2, \pi/2)}&=&(\text{CNOT})_{\text{local}}\left( S^{\dagger}\otimes S^{\dagger} \right)(\text{CNOT})_{\text{local}} \left( S\otimes S \right) \left( H^{\dagger}\otimes H^{\dagger} \right)(\text{CNOT})_{\text{local}} \left( H\otimes H \right)\nonumber\\
&=& -i\exp\left\{i\dfrac{\pi}{4}\right\}\text{SWAP}. 
\end{eqnarray}
If the Euler angles are not at the vertices of the tetrahedron, the corresponding gates are not locally equivalent to any Clifford operation. 
Therefore, although not necessary, any finite yet non-maximal $e_p$ is sufficient to have a nonzero $ m_p$ in a two-qubit gate.

{
\section{$m_{p}(U)$ of two-qubit gates up to single-qubit Clifford operations} 
\label{app-b}

In this appendix, we analyze the non-stabilizing power of two-qubit unitary gates and lay the groundwork for several results discussed in the main text. To facilitate this analysis, we consider the standard parametrization of two-qubit gates using the Cartan decomposition \cite{bertini2019exact}.

An extensive characterization of the entangling power for these gates using local invariants has been presented in Ref.~\cite{jonnadula2020entanglement}. Here, we provide a detailed analysis of $m_{p}$ for the gates in Eq.~(\ref{cartan}) up to single-qubit Clifford gates. Although these two powers are not complementary in a strict sense required to form a well-defined phase space \footnote[3]{For instance, the gate typicality is complementary to the entangling power in this sense, see Refs.~\cite{jonnadula2020entanglement, jonnadula2017impact}}, understanding their interplay can provide insights into the roles these quantities assume in the context of quantum chaos \cite{lakshminarayan2001entangling, pal2018entangling, styliaris2021information, varikuti2022out, leone2021quantum}, classical simulability, quantum error correction \cite{scott2004multipartite, rather2022thirty}, and fault-tolerant quantum computation \cite{howard2017application}. To proceed, we first consider the local operations in Eq.~(\ref{cartan}) to be single-qubit identity operators and study $m_{p}$ as a function of the Euler angles. In particular, we consider the gates that reside along the edges of the tetrahedron, denoted by Id --- CNOT, CNOT --- DCNOT, DCNOT --- SWAP, and SWAP --- Id. It is worth mentioning that these edges form the boundary of two-qubit gates in the space of $e_p$ and $g_t$ \cite{jonnadula2020entanglement}. In the following, we shall calculate $m_{p}$ along these edges. 

\textbf{\textit{Along} Id --- CNOT \textit{edge:}} 
For $c_x=c_y=c_z=0$, the unitary is locally equivalent to the identity operation, whereas it matches the CNOT gate if one of these parameters takes the value $\pi/2$. Hence, without restriction of generality, to examine $m_{p}(U)$ along this edge, we consider $c_y=c_z=0$ and vary $c_x$, yielding the unitary to be examined as $U=\exp\{-ic_x\sigma_x\otimes\sigma_x/2\}$. We now proceed to derive an expression for $m_{p}(U)$ of this unitary as a function of the Euler angle $c_x$. 

As per Eq.~\eqref{nonst2}, we need to average over the complete set of stabilizer states to calculate $m_{p}(U)$. There are a total of sixty stabilizer states in the two-qubit Hilbert space \cite{gross2006hudson}. One can easily verify that twelve among these are eigenstates of the above unitary $U$, implying that the action of $U$ does not take these states out of the stabilizer space. For any of the remaining $48$ states, we find that the set comprising the expectations of Pauli strings $\{\langle\psi |U^{\dagger} P_i U|\psi\rangle^2\}_{i=0}^{15}$ contains exactly six non-zero entries. While it depends on the stabilizer state for which Pauli strings $P_i$ the non-zero expectation values occur, they always take values from the set $\mathcal{F}\equiv \{1, \sin^2\left( c_x\right), \cos^2\left( c_x \right)\}$ with each value appearing twice. Consequently, the amount of non-stabilizerness generated by the action of $U$ on each of these stabilizer states is
\begin{eqnarray}
\mathcal{M}(U|\psi\rangle)&=&1-2^2\sum_{i=0}^{15}\dfrac{1}{2^4}\langle\psi |U^{\dagger}P_iU|\psi\rangle^4 \nonumber\\
&=&1-\dfrac{1}{4}\left( 2+2\sin^4(c_x)+2\cos^4(c_x) \right)\nonumber\\
&=&\sin^2\left(c_x\right)\cos^2\left( c_x \right). 
\end{eqnarray}
Thus, the non-stabilizing power of $U$ becomes
\begin{equation}\label{nonstab_exp}
m_{p}(U)=\dfrac{48}{60}\sin^2\left( c_x \right) \cos^2\left( c_x \right)
=\dfrac{\sin^2\left( 2c_x \right)}{5}, 
\end{equation}
Also, since the non-stabilizerness is invariant under arbitrary local Clifford transformations, for any $U'=\exp\{-iJ\sigma_a\sigma_b/2\}$, where $a, b\in\{x, y, z\}$, it follows that 
\begin{eqnarray}
m_{p}(U')=\dfrac{\sin^2(2J)}{5}. 
\end{eqnarray}

In Fig.~\ref{fig:nonst1}(a), for completeness, we present the numerically computed $m_p(U)$ as $c_x$ is varied from $0$ to $\pi/2$. In these simulations, we fix the single-qubit unitaries to be identities.  Since the set of two-qubit stabilizer states is finite, as also discussed above, $m_{p}$ can be computed by numerically summing over all the states of the form $U|\psi\rangle$. {As anticipated, the numerics show an exact agreement with the analytical expression derived in Eq.~(\ref{nonstab_exp})}.

\begin{figure}[h]
\includegraphics[scale=0.485]{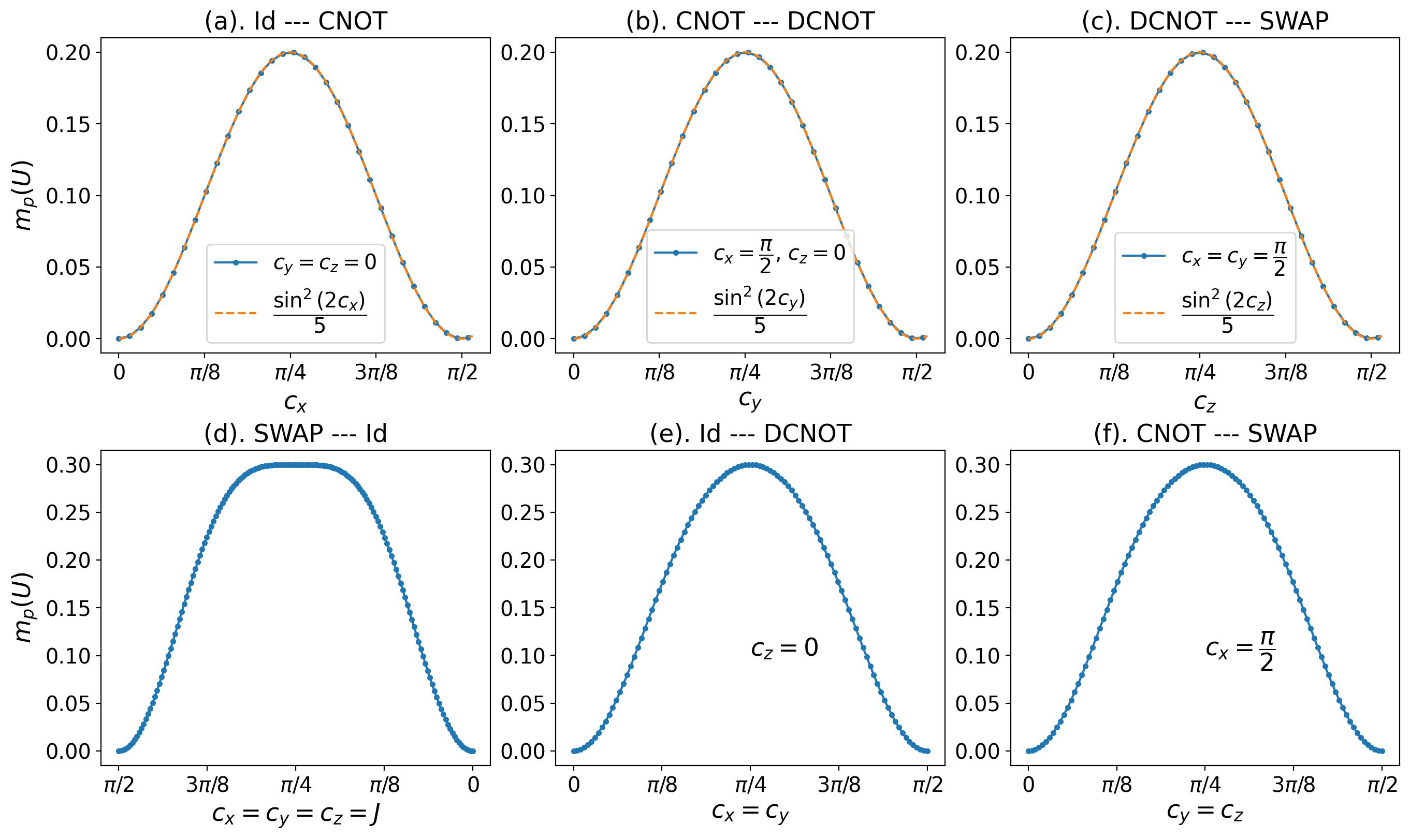}
\caption{\label{fig:nonst1} Non-stabilizing power of the two-qubit unitaries as the Euler angles are varied while the local unitaries are taken to be Cliffords. (a) Id --- CNOT ($c_x = c_y = 0$, $c_z$ is varied from $0$ to $\pi/2$). Note that the non-stabilizing power remains invariant if $c_z$ is replaced by either $c_x$ or $c_y$, provided the other two parameters are kept at zero. 
(b) CNOT --- DCNOT ($c_x = \pi/2$, $c_z = 0$, $c_y$ is varied). Along this edge, the two-qubit gates have maximal entangling power. 
(c) DCNOT --- SWAP ($c_x=c_y=\pi/2$, variable $c_z$). 
(d) Fractional powers of the SWAP operator, $-i\text{SWAP}^{2J/\pi}$ for $J\in [0, \pi/2]$. 
(e) Id --- DCNOT ($c_x=c_y$ is varied while $c_z$ is fixed at $0$).
(f) CNOT --- SWAP ($c_x=\pi/2$ and $c_y=c_z$ is varied).
In all panels, the numerical results for the non-stabilizing power are obtained by averaging over all sixty stabilizer states in the two-qubit space.
In panels (a-c), the numerical results match exactly with the analytical expectation $\sin^2(2c_j)/5$, with $c_j$ representing the parameter being varied. 
}  
\end{figure}

\textbf{\textit{Along} CNOT --- DCNOT \textit{edge}}: 
Up to single-qubit Clifford operations, the unitaries along this edge can be parametrized as
\begin{eqnarray}\label{CDC}
U&=& \exp\left\{ -i\left(\frac{\pi}{4}\sigma_x\otimes\sigma_x +\frac{c_y}{2}\sigma_y\otimes\sigma_y \right)\right\}\nonumber\\
&=& \left(\text{CNOT}\right)_{\text{loc}} \exp\left\{ -i\dfrac{c_y}{2}\sigma_y\otimes\sigma_y \right\}.
\end{eqnarray}
Here, $(\text{CNOT})_{\text{loc}}$ indicates that the unitary $\exp\{-i\pi\sigma_x\otimes\sigma_x/4\}$ is equivalent to the CNOT gate up to local Clifford operations. 
The interested reader may find further details in Appendix \ref{app-a}. Since the right-hand side of Eq.~\eqref{CDC} corresponds to a Clifford transformation of the unitaries along the Id --- CNOT edge, the non-stabilizing power can be equivalently expressed as
\begin{eqnarray}\label{eq:mag_cnot_dcnot}
m_{p}(U)=\dfrac{1}{5}\sin^2(2c_y).  
\end{eqnarray}
We numerically verify this result in Fig.~\ref{fig:nonst1}(b), where we plot $m_{p}$ against the Euler angle $c_y$. The numerical results show an exact agreement with Eq.~\eqref{eq:mag_cnot_dcnot}. Notably, when $c_y=\pi/2$, the non-stabilizing power vanishes, implying that the unitary is a Clifford operator. The corresponding unitary is locally equivalent to the DCNOT gate. 

\textbf{\textit{Along} DCNOT --- SWAP \textit{edge}}: 
The arguments presented above can be extended to calculate the $m_p(U)$ along the edge DCNOT --- SWAP, where the unitary can be expressed as 
\begin{eqnarray}\label{dcnot}
U&=&\exp\left\{ -i\left(\frac{\pi}{4}\sigma_x\otimes\sigma_x +\frac{\pi}{4}\sigma_y\otimes\sigma_y+\frac{c_z}{2}\sigma_z\otimes\sigma_z \right)\right\}\nonumber\\ 
&=&(\text{DCNOT})_{\text{loc}} \exp\left\{ -i\dfrac{c_z}{2}\sigma_z\otimes\sigma_z \right\}. 
\end{eqnarray}
Therefore, like in the previous case, the non-stabilizing power can be written as
\begin{eqnarray}
m_{p}(U)=\dfrac{1}{5}\sin^2(2c_z).  
\label{eq:mp_swap-dcnot}
\end{eqnarray}
The exact agreement between numerically obtained $m_{P}(U)$ with the above expression is presented in Fig.~\ref{fig:nonst1}(c).

\textbf{\textit{Along} Id --- SWAP \textit{edge}}:  The unitaries along this edge can be parametrized using $J\in [0, \pi/2)$ as 
\begin{eqnarray}
U&=&\exp\left\{ -i\dfrac{J}{2}\sum_{j\in\{x, y, z\}}\sigma_j\otimes\sigma_j \right\}\nonumber\\
&=&e^{iJ/2}\exp\{-iJ (\text{SWAP})\}\nonumber\\
&=&e^{iJ/2}\{-i(\text{SWAP})\}^{2J/\pi}, 
\label{eq:id-swap-param}
\end{eqnarray}
where the last equality follows from the known relation $\exp\{-i\pi(\text{SWAP})/2\}=-i(\text{SWAP})$. Thus, these unitaries are equivalent to fractional powers of the SWAP operator. Unlike the previous cases as presented in Eqs.~\eqref{CDC} and \eqref{dcnot}, fractional powers of the SWAP operator cannot be connected to the unitaries along the remaining edges via Clifford transformations. Consequently, the expression for the non-stabilizing power can not be derived in the same manner. For this case, we numerically evaluate $m_{p}$ as a function of the isotropic interaction strength $J$. The corresponding results are shown in Fig.~\ref{fig:nonst1}(d). Towards the endpoints $J=0$ and $J=\pi/2$, the non-stabilizing power vanishes, as expected. As $J$ moves away from these points, $m_p$ increases monotonically till $\pi/4$, around which point it is symmetric.  

\textbf{\textit{Along} Id --- DCNOT \textit{and} SWAP --- CNOT \textit{edges}:} Apart from the four edges discussed above, there are two additional edges of the tetrahedron, namely, Id---DCNOT and SWAP---CNOT. While we make use of the results for the above edges to illustrate our findings in the main text (see Figs.~\ref{fig:nonstab1}, ~\ref{fig:nonstab2}, ~\ref{fig:magivsR} and ~\ref{fig:magivsR1} of the main text), it is also of interest to study the non-stabilizing power of the gates located along these two edges. The unitaries along the edge Id---DCNOT can be parametrized using a single parameter as 
\begin{eqnarray}
  U=\exp\left\{ -i\left( \dfrac{J}{2}\sigma_x\otimes\sigma_x+\dfrac{J}{2}\sigma_y\otimes\sigma_y  \right) \right\}, \thickspace\text{ where } c_x=c_y=J, c_z=0 \,,
\end{eqnarray}
and the unitaries along the edge CNOT --- SWAP can be parametrized as  
\begin{eqnarray}
U'= \exp\left\{-i\left( \dfrac{\pi}{4}\sigma_x\otimes\sigma_x+\dfrac{J}{2}\sigma_y\otimes\sigma_y +\dfrac{J}{2}\sigma_z\otimes\sigma_z \right)\right\}, \thickspace\text{ where } c_x=\pi/2, c_y=c_z=J \,.    
\end{eqnarray}
The operator $U'$ can be transformed into $U$ through the action of Clifford operators. Therefore, it is natural to expect them to have similar non-stabilizing powers when the corresponding parameters $J$ are equal, which is indeed what we find --- see the numerical results shown in Fig.~\ref{fig:nonst1}(e) and ~\ref{fig:nonst1}(e). These curves do not follow the exact same law as that of the other edges shown , but they do share some qualitative similarities. Near the end points of the edges, $m_p$ vanishes as they represent Clifford gates. Moreover, the curves are symmetric around the midpoint of the edges. 

}

\section{$m_{p}(U)$ versus $e_p(U)$}
\label{app-c}

In this appendix, we visualize the classical simulability phase space of two-qubit gates by plotting $ m_p $ against $ e_p $. 
We first study the $e_p$--$m_p$ diagram of two-qubit gates for Euler angles $(c_x, c_y, c_z)$ randomly drawn from $[0, \pi/2]$ from the uniform distribution and randomly sampled single-qubit unitaries. The data corresponding to these random unitaries is shown in Fig.~\ref{fig:epvsmp} as blue bullets. 
It is instructive to compare these to analytically computable lines. 

We first argue that the gates from the edges Id --- CNOT and SWAP --- DCNOT constitute the lower boundary of the $ m_p $--$ e_p $ diagram. The non-stabilizing power of gates located along these edges has been discussed in Sec.~\ref{sec-2} of the main text. We begin by considering the family of gates lying along the Id --- CNOT edge, given by
\[
U(c_x) = \exp\left\{ -i \frac{c_x}{2} \sigma_x \otimes \sigma_x \right\},
\]  
(i.e., we choose only those gates along the Id --- CNOT edge where all single-qubit gates that can modify the Cartan decomposition have been set to identity operators). One can qualitatively argue that for a fixed value of $ c_x $,
$m_p\left( U(c_x) \right) $ is typically less than $ m_p\left( (v_1 \otimes v_2) U(c_x) (w_1 \otimes w_2) \right)$, for arbitrary non-Clifford single-qubit unitaries $ v_1, v_2, w_1, w_2 $. Recall that the Clifford group forms a unitary-3 design and thus constitutes an (over)complete basis in the operator space. As such, any unitary can be expanded in terms of Clifford operators. When the unitary is itself a Clifford operator, its expansion involves only a single term. For non-Clifford unitaries, the minimal number of terms in such an expansion can be as low as two — which is the case for $ U(c_x) $, as shown by  
\begin{equation} \label{expan-non}
U(c_x) = \cos\left( \frac{c_x}{2} \right)\mathbb{I}_4 - i \sin\left( \frac{c_x}{2} \right) \left( \sigma_x \otimes \sigma_x \right),
\end{equation}  
where both $ \mathbb{I}_4 $ and $ \sigma_x \otimes \sigma_x $ are Clifford operations. An arbitrary single-qubit unitary can be expressed as $u = \alpha_0 \mathbb{I}_2 + \alpha_x \sigma_x + \alpha_y \sigma_y + \alpha_z \sigma_z$, subject to the normalization condition $ \sum_i |\alpha_i|^2 = 1 $. Applying such local non-Clifford unitaries increases the number of terms in the expansion of Eq.~(\ref{expan-non}), pushing the overall unitary further away from the Clifford group and thus increasing its non-stabilizing power, while maintaining the entanglement-generating power. 
This behavior is particularly apparent when $ c_x = 0 $ or $ \pi/2 $, in which cases $ U(c_x) $ is locally equivalent (up to single-qubit Clifford gates) to the two-qubit Clifford gates Id, CNOT, SWAP, and DCNOT. These unitaries correspond to the vertices of the tetrahedron shown in Fig.~\ref{fig:schem}. When non-Clifford single-qubit unitaries are applied in these cases, the resulting two-qubit gate generically acquires non-zero non-stabilizing power. 
Thus, we expect the unitaries from the edges Id --- CNOT and SWAP --- DCNOT to have minimal $m_p$ at a given value of $e_p$. 

\begin{figure}[H]
\centering
\includegraphics[scale=0.5]{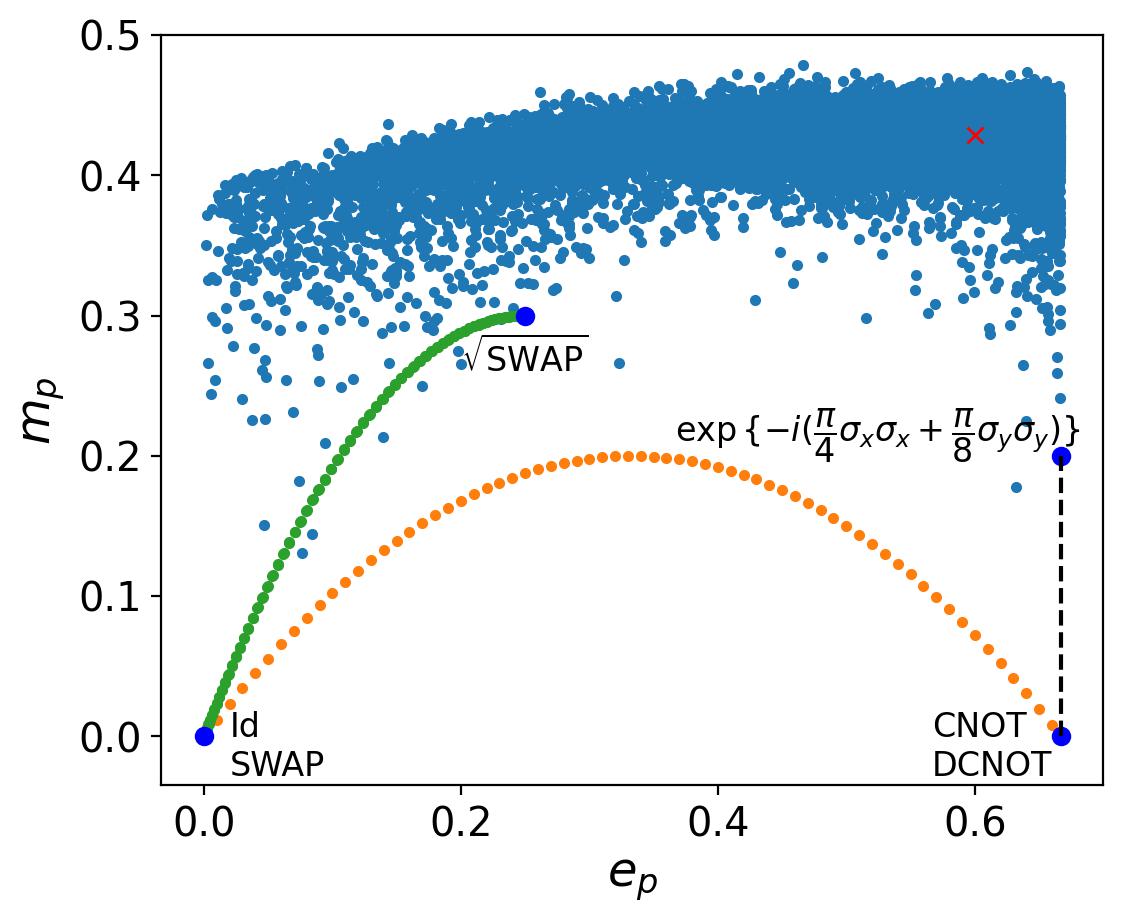}
\caption{\label{fig:epvsmp} Entangling power versus non-stabilizing power for samples of two-qubit unitaries drawn at random from the operator space. 
The red-colored marker at the point ($m_{p}, e_p$)=($1-4/7, 3/5$) indicates the Haar-average value. 
The orange curve is given by the parabolic equation $y=(6/5)x-(9/5)x^2$ and traces the values along the edges Id --- CNOT and SWAP --- DCNOT, yielding a lower bound on the data.  
The endpoints of these analytic curves, shown with blue dots, represent the two-qubit Clifford operations.
In addition, the vertical line towards the right represents the edge CNOT --- DCNOT and the green curve denotes the edge Id --- SWAP, whose highest points are reached at $c_x=\pi/2$, $c_y=\pi/4$, and $c_z=0$ and at $\sqrt{\mathrm{SWAP}}$, respectively. 
} 
\end{figure}

The values of non-stabilizing and entanglement generating power along these edges can be computed analytically for any unitary in the Cartan form or modifications thereof with single-qubit Clifford operations. 
The entangling power of two-qubit gates can be expressed using the Euler angles as \cite{jonnadula2020entanglement}
\begin{eqnarray}
e_p(U)=\dfrac{2}{3}\left[ \sin^2(c_x)\cos^2(c_y)+\sin^2(c_y)\cos^2(c_z)+\sin^2(c_z)\cos^2(c_x) \right]. 
\end{eqnarray}
Thus, along the edges Id --- CNOT and SWAP --- DCNOT the entangling powers are given by $e_p(c_x)=2\sin^2(c_x)/3$ and $e_p(c_z)=2\sin^2(c_z)/3$, respectively. 
The corresponding non-stabilizing powers are given by $m_p(c_x)=\sin^2(2c_x)/5$ and $m_{p}(c_z)=\sin^2(2c_z)/5$, respectively. 
By noticing that $\sin^2(2c_{x(z)})=3e_p(c_{x(z)})/2$, one can rewrite $m_{p}$ in terms of $e_{p}$ as
\begin{eqnarray}
m_{p}(c_{x(z)})&=&\dfrac{1}{5}\left[ 4\sin^2(2c_{x(z)})\left( 1-\sin^2(2c_{x(z)}) \right) \right] \nonumber\\
&=&\dfrac{6}{5}e_{p}(c_{x(z)})\left( 1-\dfrac{3}{2}e_{p}(c_{x(z)}) \right). 
\end{eqnarray}
The resulting curve in the $m_p$--$e_p$ plane is plotted in Fig.~\ref{fig:epvsmp} in orange color. 
All randomly sampled correlation data lies above this line, confirming the arguments for it to constitute a lower boundary. 


\section{Non-stabilizing power of two-qubit gates embedded in a four-qubit Hilbert space}
\label{app-d}

Here, we present numerical results for the non-stabilizing power of spatially extended two-qubit gates. Given a two-qubit gate $ U_2 $, we evaluate $ m_p(\mathbb{I}_{2} \otimes U_2 \otimes \mathbb{I}_{2}) $. The corresponding results are plotted in Fig.~\ref{fig:spaext}, where we focus on gates that lie along the same four edges of the tetrahedron in Fig.~\ref{fig:schem} as considered in the main text. Numerical data for randomly sampled unitaries are shown in blue.

The edges Id --- CNOT, CNOT --- DCNOT, and DCNOT --- SWAP are related to each other via Clifford transformations. In agreement with this observation, the obtained data is very similar. We also observe that the numerical data fit well to the function $ \sin^2(2c_i)/4.25 $, where $i=x,y,z$ along all three edges. The behavior along the edge Id --- SWAP is quantitatively different but appears to follow a similar pattern, with a maximum in the center and roughly symmetric behavior with respect to the control parameter.  

\begin{figure}[H]
    \centering
    \includegraphics[width=0.6\linewidth]{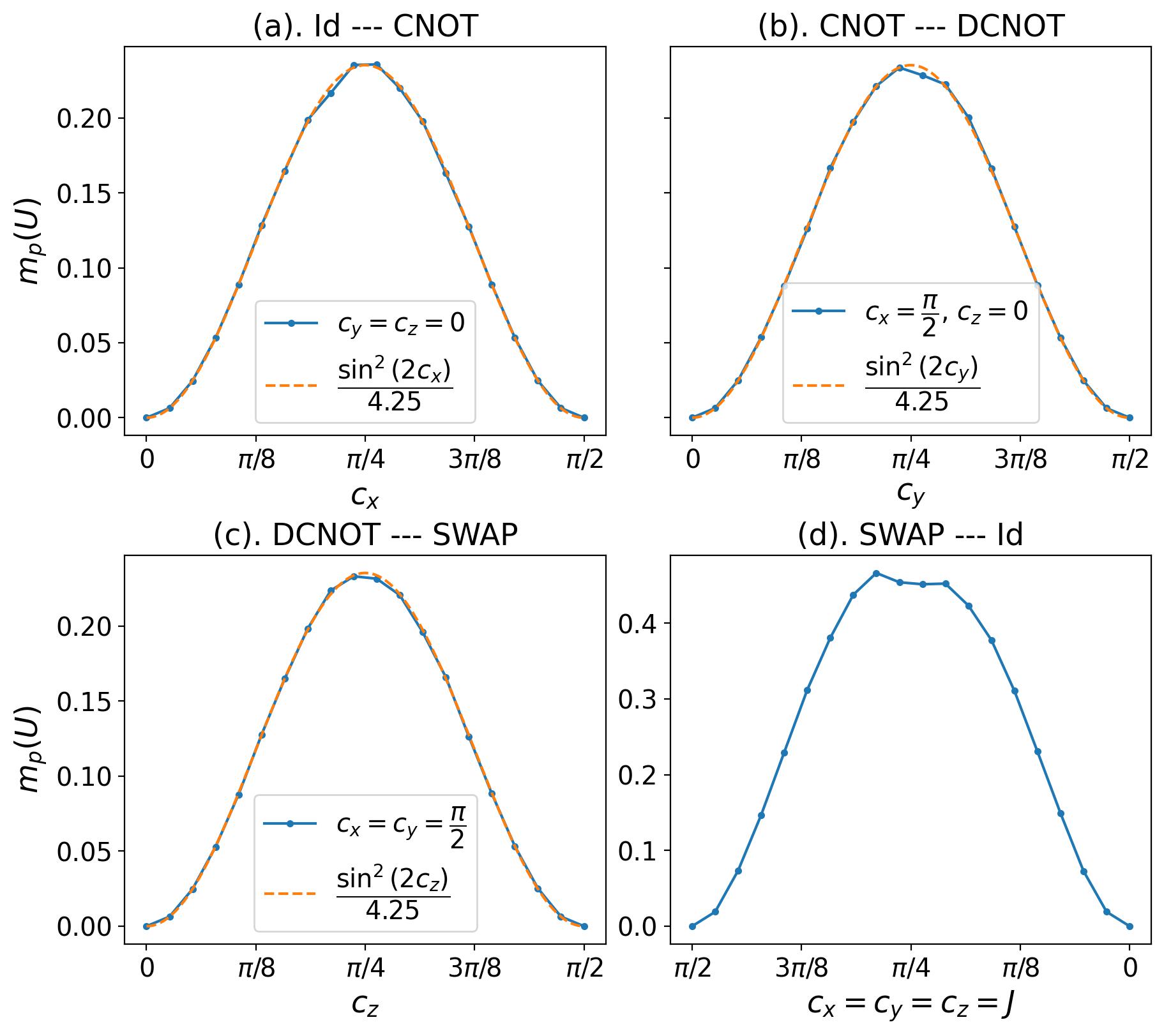}
    \caption{\label{fig:N4magic} Non-stabilizing power of two-qubit gates embedded in a four-qubit Hilbert space, i.e., of the form $\mathbb{I}_{2}\otimes U_2\otimes\mathbb{I}_{2}$. Here, $U_2$ is a two-qubit gate that lies along four edges of the tetrahedron in Fig.~\ref{fig:schem}. The results are averaged over nearly $\sim 10^3$ samples of stabilizer states supported over four qubits. Blue curves denote the numerical results. Orange curves in the first three panels indicate the analytical fit $\sin^2(2c_i)/4.25$, where $c_i$ represents the Euler angle that is being varied. }
    \label{fig:spaext}
\end{figure}

\section{Proof of Theorem \ref{theo1}}\label{app-e}
\begin{theorem*} (\textup{Restatement of Theorem~\ref{theo1}})
Let $U$ and $V$ be two arbitrary non-Clifford unitary operators supported over an $N$-qubit Hilbert space $\mathcal{H}=\mathbb{C}^{2^N}$ with non-stabilizing powers $m_p(U)$ and $m_p(V)$, respectively, and let $C$ be a Clifford operator sampled at random from the Clifford group according to its Haar measure. Then, the following relation holds:
\begin{equation}
\langle m_{p}(VCU) \rangle_{C} = m_{p}(U)+m_{p}(V)-\dfrac{m_{p}(U)m_{p}(V)}{\overline{m_{p}}},  
\end{equation}
where $\overline{m_{p}}=\langle m_{p}(W)\rangle_{W}$ denotes the non-stabilizing power averaged over the Haar-random unitaries $W\in \mathcal{U}(2^N)$.  
\end{theorem*}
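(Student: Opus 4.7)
The plan is to unfold the definition, push the Clifford integration through to a twirl acting on a pure-state projector, and then invoke a structural decomposition of that twirl in terms of two Clifford-invariant moments. Starting from the definition,
$$
\langle m_{p}(VCU)\rangle_{C} = 1 - 2^N \int d\mu(C)\, \text{Tr}\bigl[Q\, V^{\otimes 4} C^{\otimes 4} U^{\otimes 4}\Omega\, U^{\dagger\otimes 4} C^{\dagger\otimes 4} V^{\dagger\otimes 4}\bigr],
$$
with $\Omega:=\overline{|\psi\rangle\langle\psi|^{\otimes 4}}$ the stabilizer 4-moment, I would use cyclicity of the trace together with the Clifford invariance $C^{\otimes 4}QC^{\dagger\otimes 4}=Q$ (which holds because Cliffords permute Paulis up to signs that are trivial in the fourth tensor power) to move the $C$-average onto the $U$-side. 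Writing $\Omega$ as an average over stabilizer states then reduces the problem to computing the Clifford 4-fold twirl of a pure-state projector,
$$
\mathcal{K}(|\chi\rangle) := \int d\mu(C)\,\bigl(C|\chi\rangle\langle\chi|C^{\dagger}\bigr)^{\otimes 4}, \qquad |\chi\rangle=U|\psi\rangle.
$$

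The crucial structural input is the decomposition
$$
\mathcal{K}(|\chi\rangle) = \Bigl(1-\tfrac{\mathcal{M}(|\chi\rangle)}{\overline{m_p}}\Bigr)\Omega + \tfrac{\mathcal{M}(|\chi\rangle)}{\overline{m_p}}\cdot\frac{\Pi^{(4)}_{\text{sym}}}{\binom{d+3}{4}},
$$
where $\Pi^{(4)}_{\text{sym}}/\binom{d+3}{4}$ is the Haar 4-moment of a pure state. Morally, this says that the Clifford 4-twirl of a pure state interpolates linearly between the stabilizer and the Haar 4-moments, with the weight pinned by the stabilizer entropy $\mathcal{M}(|\chi\rangle)$; this reflects the fact that $\mathrm{Cl}(N)$ is a unitary 3-design but deviates from being a 4-design exactly in the direction that detects non-stabilizerness. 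I would justify this ansatz by noting that both $\Omega$ and $\Pi^{(4)}_{\text{sym}}$ lie in the joint $S_4\times\mathrm{Cl}(N)$-commutant and span the two-dimensional subspace of it that is relevant for trace pairings against operators of the form $V^{\dagger\otimes 4}QV^{\otimes 4}$, then fixing the two coefficients from two independent traces: $\text{Tr}[\mathcal{K}(|\chi\rangle)]=1$ (automatic) and $d\,\text{Tr}[Q\,\mathcal{K}(|\chi\rangle)]=d\,\text{Tr}[Q|\chi\rangle\langle\chi|^{\otimes 4}]=1-\mathcal{M}(|\chi\rangle)$ (using Clifford invariance of $Q$ and the definition of stabilizer entropy).

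With this decomposition in hand, the trace $\text{Tr}\bigl[(V^{\dagger\otimes 4}QV^{\otimes 4})\,\mathcal{K}(|\chi\rangle)\bigr]$ reduces to two elementary contractions: $\text{Tr}[V^{\dagger\otimes 4}QV^{\otimes 4}\,\Omega]=(1-m_p(V))/d$ directly from the definition of $m_p(V)$, and $\text{Tr}[V^{\dagger\otimes 4}QV^{\otimes 4}\,\Pi^{(4)}_{\text{sym}}]/\binom{d+3}{4}=(1-\overline{m_p})/d$, using $V$-invariance of $\Pi^{(4)}_{\text{sym}}$ together with the Pauli-sum evaluation $d\,\text{Tr}[Q\Pi^{(4)}_{\text{sym}}]/\binom{d+3}{4}=4/(d+3)$. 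Averaging the stabilizer label $|\psi\rangle$ then replaces $\mathcal{M}(U|\psi\rangle)$ by $m_p(U)$ by definition of the non-stabilizing power, and assembling the pieces gives
$$
1-\langle m_{p}(VCU)\rangle_{C} = \Bigl(1-\tfrac{m_p(U)}{\overline{m_p}}\Bigr)\bigl(1-m_p(V)\bigr) + \tfrac{m_p(U)}{\overline{m_p}}\bigl(1-\overline{m_p}\bigr),
$$
which rearranges to the claimed identity.

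The main obstacle is justifying the two-term ansatz for $\mathcal{K}(|\chi\rangle)$: while its coefficients are uniquely determined once one grants that only $\Omega$ and $\Pi^{(4)}_{\text{sym}}$ contribute, ruling out further operators in the Clifford 4-commutant (which is strictly larger than the permutation algebra, since $\mathrm{Cl}(N)$ fails to be a 4-design) requires either a character-theoretic or stochastic-Lagrangian analysis of that commutant, or a direct verification that any residual invariants are annihilated by the trace pairing with operators of the form $V^{\dagger\otimes 4}QV^{\otimes 4}$. All natural sanity checks — $V$ Clifford returning $m_p(U)$, Haar-averaging $U$ or $V$ returning $\overline{m_p}$, and the symmetric case $m_p(U)=m_p(V)$ reproducing Eq.~(\ref{sameU}) — pass, providing strong evidence that the decomposition captures the relevant structure.
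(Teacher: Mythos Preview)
Your approach and the paper's rest on the same structural fact---that the permutation-invariant part of the four-copy Clifford commutant is two-dimensional---but package it differently. The paper invokes a Clifford--Weingarten expansion in the $\{QT_\sigma, Q^\perp T_\sigma\}$ basis, uses $T_\pi$-invariance of $U^{\otimes 4}\Omega\, U^{\dagger\otimes 4}$ to collapse it to $(1-m_p(U))\,\Omega + m_p(U)\,R$ with $R=\sum_{\pi,\sigma}W^-_{\pi,\sigma}Q^\perp T_\sigma$ kept abstract, and then determines the remaining trace $2^N\text{Tr}\bigl[V^{\dagger\otimes 4}QV^{\otimes 4}R\bigr]$ by a substitution trick: replace $U$ by a Haar-random unitary and solve using $\langle\langle m_p(VCU)\rangle_C\rangle_{U}=\overline{m_p}$. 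You instead work directly in the $\{\Omega, H\}$ basis, and your evaluation of $\text{Tr}\bigl[V^{\dagger\otimes 4}QV^{\otimes 4}H\bigr]$ via the $V$-invariance of the Haar moment is precisely the content of the paper's substitution trick, just front-loaded into the choice of basis. Regarding your acknowledged obstacle: the two-term ansatz \emph{does} hold as an operator identity for each pure $|\chi\rangle$, because the same Weingarten expansion together with $T_\pi|\chi\rangle^{\otimes 4}=|\chi\rangle^{\otimes 4}$ shows that $\mathcal{K}(|\chi\rangle)$ depends affinely on the single scalar $\text{Tr}\bigl[Q(|\chi\rangle\langle\chi|)^{\otimes 4}\bigr]$, hence on $\mathcal{M}(|\chi\rangle)$ alone; matching at $\mathcal{M}=0$ (stabilizer input, giving $\Omega$) and at the Haar average (giving $H$ with $\langle\mathcal{M}\rangle=\overline{m_p}$) then pins your interpolation uniquely. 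So your residual gap is exactly the structural input the paper also takes for granted without proof, namely that the four-copy Clifford commutant is spanned by $\{QT_\pi, Q^\perp T_\pi\}$.
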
    
\begin{proof}
We are interested in evaluating $\langle m_{p}(VCU)\rangle_{C}$, the average non-stabilizing power when two non-Clifford operations $U$ and $V$ are interspersed with a random Clifford operation $C$. Using the definition of the non-stabilizing power given in Eq.~\eqref{eq:nonstabpowerdef}, we have
\begin{eqnarray}\label{cliffordintm1}
\langle m_{p}(VCU)\rangle_{C}&=&1-2^N\int_{C}d\mu(C)\text{Tr}\left[ V^{\dagger \otimes 4}QV^{\otimes 4} C^{\otimes 4}U^{\otimes 4}\overline{\left(|\psi\rangle\langle\psi |\right)^{\otimes 4}} U^{\dagger \otimes 4}C^{\dagger \otimes 4} \right], 
\end{eqnarray}
where $d\mu(C)$ is the invariant Haar measure associated with the Clifford group supported over a Hilbert space consisting of $N$ qubits ($\mathcal{H}=\mathbb{C}^{2^N}$).
Recall that $ \overline{\left( |\psi\rangle\langle\psi| \right)^{\otimes 4}} $ represents the fourth moment of the ensemble of stabilizer states in $\mathcal{H}=\mathbb{C}^{2^N}$, which can be obtained by evaluating the integral $ \int_{C} d\mu(C) \left( C |\psi\rangle\langle\psi | C^{\dagger} \right)^{\otimes 4} $ for a fixed stabilizer state $|\psi\rangle$. In order to solve Eq.~(\ref{cliffordintm1}), we first consider the following relation:
\begin{align}\label{cliffordint}
\int_{C}d\mu(C)&\left[ C^{\otimes 4}U^{\otimes 4}\overline{\left(|\psi\rangle\langle\psi |\right)^{\otimes 4}}U^{\dagger \otimes4}C^{\dagger \otimes4} \right]\nonumber\\
&=\sum_{\pi, \sigma} \left[ W^{+}_{\pi, \sigma}\text{Tr}\left( U^{\otimes 4}\overline{\left(|\psi\rangle\langle\psi |\right)^{\otimes 4}}U^{\dagger \otimes 4}QT_{\pi} \right) QT_{\sigma}+W^{-}_{\pi, \sigma}\text{Tr}\left( U^{\otimes 4}\overline{\left(|\psi\rangle\langle\psi |\right)^{\otimes 4}}U^{\dagger \otimes 4}Q^{\perp}T_{\pi} \right)Q^{\perp}T_{\sigma}  \right]\nonumber\\
&=\sum_{\pi, \sigma} \left[ W^{+}_{\pi, \sigma}\text{Tr}\left( U^{\otimes 4}\overline{\left(|\psi\rangle\langle\psi |\right)^{\otimes 4}}U^{\dagger \otimes 4}Q \right) QT_{\sigma}+W^{-}_{\pi, \sigma}\text{Tr}\left( U^{\otimes 4}\overline{\left(|\psi\rangle\langle\psi |\right)^{\otimes 4}}U^{\dagger \otimes 4}Q^{\perp}\right)Q^{\perp}T_{\sigma}  \right], 
\end{align} 
where $ Q^{\perp} = \mathbb{I}_{2^N} - Q $, and $ \{T_{\pi}\} $ denotes the set of permutation operators acting on four replicas of the Hilbert space that support both $ |\psi\rangle $ and $ U $. The coefficients $W^{\pm}_{\pi, \sigma}$ denote the generalized Weingarten functions. In the second equality, we made use of the facts $[T_{\pi}, U^{\otimes 4}]=0$ and $T_{\pi}|\psi\rangle^{\otimes 4}=|\psi\rangle^{\otimes 4}$ for all $\pi$ and $|\psi\rangle$. 
In Eq.~(\ref{cliffordint}), we notice the terms
\begin{eqnarray}
\text{Tr}\left[ U^{\otimes 4}\overline{\left( |\psi\rangle\langle\psi | \right)^{\otimes 4}} U^{\dagger \otimes4} Q \right] =\dfrac{1-m_{p}(U)}{2^N}
\end{eqnarray}
and 
\begin{eqnarray}\label{D5}
\text{Tr}\left[ U^{\otimes 4}\overline{\left( |\psi\rangle\langle\psi | \right)^{\otimes 4}} U^{\dagger \otimes4} Q^{\perp} \right]&=& 1-\text{Tr}\left[ U^{\otimes 4}\overline{\left( |\psi\rangle\langle\psi | \right)^{\otimes 4}} U^{\dagger \otimes4} Q \right]\nonumber\\
&=&m_{p}(U)+\left(1-m_{p}(U)\right)\left( 1-\dfrac{1}{2^N} \right).
\end{eqnarray}
In the second equality of Eq.~(\ref{D5}), we have added and subtracted $m_{p}(U)$ to obtain a form that is convenient for later. After incorporating the above equations in Eq.~(\ref{cliffordint}), we get
\begin{align}\label{cliffordint1}
\int_{C}d\mu(C)&\left[ C^{\otimes 4}U^{\otimes 4}\overline{\left(|\psi\rangle\langle\psi |\right)^{\otimes 4}}U^{\dagger \otimes 4}C^{\dagger \otimes4} \right]\nonumber\\
&=\left[ 1-m_{p}(U) \right] \sum_{\pi, \sigma}\left[ \dfrac{W^{+}_{\pi, \sigma}QT_{\sigma}}{2^N}+\left( 1-\dfrac{1}{2^N} \right)W^{-}_{\pi, \sigma} Q^{\perp}T_{\sigma} \right]+m_{p}(U)\sum_{\pi, \sigma}W^{-}_{\pi, \sigma}Q^{\perp}T_{\sigma}. 
\end{align}
The right-hand side of this equation can be simplified by noticing that
\begin{eqnarray}
\overline{\left( |\psi\rangle\langle\psi | \right)^{\otimes 4}}= \sum_{\pi, \sigma}\left[ \dfrac{W^{+}_{\pi, \sigma}QT_{\sigma}}{2^N}+\left( 1-\dfrac{1}{2^N} \right)W^{-}_{\pi, \sigma} Q^{\perp}T_{\sigma} \right]. 
\end{eqnarray}
Incorporating this relation in Eq.~(\ref{cliffordint1}), we get
\begin{eqnarray}
\int_{C}d\mu(C)\left[ C^{\otimes 4}U^{\otimes 4}\overline{\left(|\psi\rangle\langle\psi |\right)^{\otimes 4}}U^{\dagger 4}C^{\dagger \otimes4} \right]=\left( 1-m_{p}(U) \right)\overline{\left( |\psi\rangle\langle\psi | \right)^{\otimes 4}}+m_{p}(U)\sum_{\pi, \sigma}W^{-}_{\pi, \sigma}Q^{\perp}T_{\sigma}.  
\end{eqnarray}
Finally, substituting this equation in Eq.~(\ref{cliffordintm1}) gives 
\begin{eqnarray}\label{cliffordint2}
\langle m_{p}(VCU)\rangle_{C}&=&1-2^N(1-m_{p}(U))\text{Tr}\left[ V^{\dagger \otimes 4}QV^{\otimes 4}\overline{\left( |\psi\rangle\langle\psi | \right)^{\otimes 4}} \right]-2^Nm_{p}(U)\text{Tr}\left( V^{\dagger \otimes 4}QV^{\otimes 4}\sum_{\pi, \sigma}W^{-}_{\pi, \sigma}Q^{\perp}T_{\sigma} \right) \nonumber\\
&=&1-(1-m_{p}(U))(1-m_{p}(V))-m_{p}(U)2^N\text{Tr}\left( V^{\dagger \otimes 4}QV^{\otimes 4}\sum_{\pi, \sigma}W^{-}_{\pi, \sigma}Q^{\perp}T_{\sigma} \right)\nonumber\\
&=&m_{p}(U)+m_{p}(V)-m_{p}(U)m_{p}(V)-m_{p}(U)2^N\text{Tr}\left( V^{\dagger \otimes 4}QV^{\otimes 4}\sum_{\pi, \sigma}W^{-}_{\pi, \sigma}Q^{\perp}T_{\sigma} \right).
\end{eqnarray}
In the second equality, we used the fact $2^N\text{Tr}\left[ V^{\dagger \otimes 4}QV^{\otimes 4}\overline{\left( |\psi\rangle\langle\psi | \right)^{\otimes 4}} \right]=1-m_{p}(V)$. We now relate the last term on the right-hand side of Eq.~(\ref{cliffordint2}) to $m_p(V)$. To do so, we substitute a Haar-random unitary $\mathcal{U}$ from the unitary group $\mathcal{U}(2^N)$ in place of $U$ and compute $\langle\langle m_p(VCU)\rangle_{C}\rangle_{U\in \mathcal{U}(2^N)}$. It is straightforward to see that $\langle\langle m_p(VCU)\rangle_{C}\rangle_{U\in \mathcal{U}(2^N)}=\overline{m_p}$, where $\overline{m_{p}}$ denotes the Haar value for the non-stabilizing power and is given by $\overline{m_{p}}=1-4/(2^N+3)$ \cite{leone2022stabilizer}. It then follows that 
\begin{eqnarray}
 \overline{m_p}=\overline{m_p}+m_p(V)-\overline{m_p}m_p(V)-\overline{m_p}2^N\text{Tr}\left( V^{\dagger \otimes 4}QV^{\otimes 4}\sum_{\pi, \sigma}W^{-}_{\pi, \sigma}Q^{\perp}T_{\sigma} \right).
\end{eqnarray}
Consequently, we get
\begin{eqnarray}
2^N\text{Tr}\left( V^{\dagger \otimes 4}QV^{\otimes 4}\sum_{\pi, \sigma}W^{-}_{\pi, \sigma}Q^{\perp}T_{\sigma} \right)=m_{p}(V)\left( \dfrac{1}{\overline{m_{p}}}-1 \right).
\end{eqnarray}
Finally, Eq.~(\ref{cliffordint2}) becomes
\begin{eqnarray}\label{magicdec}
\langle m_{p}(VCU)\rangle_{C}=m_{p}(U)+m_{p}(V)-\dfrac{m_{p}(U)m_{p}(V)}{\overline{m_{p}}}.  
\end{eqnarray}
This concludes the proof of Theorem \ref{theo1}.  
\end{proof}\\

\section{Fluctuation of $m_p(VCU)$ and typicality of Theorem~\ref{theo1}}
\label{app-var}
{
In this appendix, we analyze the fluctuations of $m_p(VCU)$ (denoted with $\Delta m_p(VCU)$) when $C$ is drawn uniformly at random from the Clifford group $C(2^N)$. In particular, we derive an upper bound on the fluctuations and also provide exact values for special cases. For simplicity, we take $U$ and $V$ to be either identical. Then, the variance of $m_p(UCU)$ can be evaluated to be 
\begin{eqnarray}\label{F1}
\Delta^2m_p(VCU) = \langle m^2_p(UCU)\rangle_{C} - \langle m_p(UCU)\rangle^2_{C}. 
\end{eqnarray}
The second term on the right-hand side of the above equation is related to $m_p(U)$ via Eq.~(\ref{sameU}) of the main text as 
\begin{equation}\label{F2}
\langle m_p(UCU)\rangle^2_{C}=m^2_p(U)\left[ 2-\dfrac{m_p(U)}{\overline{m_p}} \right]^2.     
\end{equation}

On the other hand, we expand the first term by rewriting it in the form of Eq.~(\ref{25}), as follows: 
\begin{eqnarray}
\langle m^2_p(UCU)\rangle_{C} = 1&-&2^{N+1}\int_{C}d\mu(C)\text{Tr}\left[ U^{\dagger \otimes 4}QU^{\otimes 4} C^{\otimes 4}U^{\otimes 4}\overline{\left(|\psi\rangle\langle\psi |\right)^{\otimes 4}} U^{\dagger \otimes 4}C^{\dagger \otimes 4} \right] \nonumber\\
&+& 2^{2N} \int_{C}d\mu(C)\text{Tr}\left[ U^{\dagger \otimes 8}Q^{\otimes 2}U^{\otimes 8} C^{\otimes 8}U^{\otimes 8}\left\{\overline{\left(|\psi\rangle\langle\psi |\right)^{\otimes 4}}\right\}^{\otimes 2} U^{\dagger \otimes 8}C^{\dagger \otimes 8} \right], 
\end{eqnarray}
where the overline denotes the Haar average over the ensemble of stabilizer states. 
The above equation can be further simplified and can be written as
\begin{eqnarray}\label{F4}
\langle m^2_p(UCU)\rangle_{C} &=& 2\; \langle m_p(UCU)\rangle_{C}-1 +     2^{2N} \int_{C}d\mu(C)\text{Tr}\left[ U^{\dagger \otimes 8}Q^{\otimes 2}U^{\otimes 8} C^{\otimes 8}U^{\otimes 8}\overline{\left(|\psi\rangle\langle\psi |\right)^{\otimes 8}} U^{\dagger \otimes 8}C^{\dagger \otimes 8} \right]\nonumber\\
&=& 2\; m_p(U)\left[ 2-\dfrac{m_p(U)}{\overline{m_p}} \right] - 1 + \langle \mathcal{F}^2  \rangle_{C}\;, 
\end{eqnarray}
where $\mathcal{F} =   2^{N} \text{Tr}\left[ U^{\dagger \otimes 4}QU^{\otimes 4} C^{\otimes 4}U^{\otimes 4}\overline{\left(|\psi\rangle\langle\psi |\right)^{\otimes 4}} U^{\dagger \otimes 4}C^{\dagger \otimes 4} \right]$. Taking Eqs.~(\ref{F2}) and ~(\ref{F4}) together, the variance appearing in Eq.~(\ref{F1}) becomes
\begin{eqnarray}
\Delta^2 m_p(UCU) =  2\; m_p(U)\left[ 2-\dfrac{m_p(U)}{\overline{m_p}} \right] - 1 + \langle \mathcal{F}^2  \rangle_{C}\ - m^2_p(U)\left[ 2-\dfrac{m_p(U)}{\overline{m_p}} \right]^2.
\end{eqnarray} 
An upper bound on the variance follows from the fact that 
$\mathcal{F} = 1 - m_p(UCU) \leq 1$, which implies $\mathcal{F}^2 \leq \mathcal{F}$ 
and hence $\langle \mathcal{F}^2 \rangle_{C} \leq \langle \mathcal{F} \rangle_{C}$. From this, it immediately follows that 
\begin{eqnarray}\label{var-ineq}
\Delta^2 m_p(UCU)&\leq &  m_p(U)\left[ 2-\dfrac{m_p(U)}{\overline{m_p}} \right] - m^2_p(U)\left[ 2-\dfrac{m_p(U)}{\overline{m_p}} \right]^2\nonumber\\ 
&\leq & \left[ 2m_p(U)-\dfrac{m^2_p(U)}{\overline{m_p}} \right] \left[ 1-2m_p(U)+\dfrac{m^2_p(U)}{\overline{m_p}} \right] .     
\end{eqnarray}

When $U$ is a Clifford unitary, the right-hand side of the above inequality vanishes, implying that fluctuations are completely suppressed. Consequently, the bound is tight for unitaries with small non-stabilizing power and becomes exact in the Clifford limit. 

In contrast, when $U$ is a Haar random unitary, the variance can be analytically obtained by averaging over $U\in U(2^N)$. 
\begin{eqnarray}\label{var-haar}
 \Delta^2 m_p(UCU) = 2\overline{m_p}-\overline{m_p}^2-1+2^{2N}\text{Tr}\left( Q^{\otimes 2}\Pi^{(8)} \right).
\end{eqnarray}
In this case, when the system size ($N$) is large, the Haar averaged non-stabilizing power $\overline{m_p}$ approaches $1$, again leading to a suppression of fluctuations.

From the inequality in Eq.~(\ref{var-ineq}) and Eq.~(\ref{var-haar}), it can be seen that the fluctuations remain minimal when $U$ is either close to a Clifford or a Haar random unitary. 
This is indeed consistent with the behavior that we observe numerically in Figs.~\ref{fig:nonstab1} and ~\ref{fig:nonstab2} of the main text.

}

\section{Some consequences of Theorem \ref{theo1}}\label{app-f}
The above theorem has several direct consequences that are worth mentioning. 

\subsection{Corollary \ref{corrollaryc}: Non-stabilizing power when random Clifford operations are interlaced with non-Clifford operations }
\label{app-F1}

The result in Theorem~\ref{theo1} helps one to track the evolution of non-stabilizing power in circuits where the random Clifford operations are repeatedly interlaced with arbitrary non-Clifford operations. In particular, when all the non-Clifford operations have identical non-stabilizing powers, a closed-form expression for the evolution of $m_p$ follows:
\begin{eqnarray}\label{F1}
\langle M_{p}(U^{(t)})\rangle_{C_1, C_2, \cdots, C_{t-1}}=\overline{m_{p}}\left[ 1-\left( 1-\dfrac{m_{p}(U)}{\overline{m_{p}}} \right)^{t} \right]=\overline{m_p}\left[ 1-\exp\left\{ t\ln\left( 1-\dfrac{m_p(U)}{\overline{m_p}} \right) \right\} \right].
\end{eqnarray}
This result is also presented in Corollary \ref{corrollaryc} of the main text. 

\subsection{Thermalization of non-stabilizing power in limit of $m_p(U)/\overline{m_p}\ll 1$}
\label{app-F2}

In the main text, we have demonstrated the evolution of non-stabilizing power using a solvable two-qubit setting. It is interesting to probe how the thermalization of $m_p$ scales with the gate parameters, such as the Euler angles or the interaction strengths. For this purpose, we take the limit of small $m_{p}(U)$, i.e., $m_{p}(U)\ll \overline{m_{p}}$, in which we have 
\begin{equation}
\ln\left( 1-\dfrac{m_p(U)}{\overline{m_p}} \right) \approx -\dfrac{m_p(U)}{\overline{m_p}}.    
\end{equation} 
Then, Eq.~(\ref{expo_magic}) (or Eq.~\ref{F1}) can be written as 
\begin{eqnarray}
\langle m_{p}(U^{(t)}) \rangle_{\tilde{C}}\approx \overline{m_{p}}\left[ 1-\exp\left\{ -t\dfrac{m_{p}(U)}{\overline{m_{p}}} \right\} \right].
\end{eqnarray}
The smaller the non-stabilizing power of $U$, the longer it takes for $\langle m_{p}(U^{(n)}) \rangle_{\tilde{C}}$ to thermalize and reach the saturation value. The number of random Cliffords or time steps needed to drive $\langle m_{p}(U^{(n)}) \rangle_{\tilde{C}}$ towards the Haar average is therefore given by the scale $t^{*}\sim \overline{m_{p}}/m_{p}(U)$. In the simplest case when $U$ is a two-qubit gate of the form $U=\exp\{ -i\frac{c_z}{2}\sigma_z\otimes\sigma_z\}$, we have
\begin{eqnarray}
m_{p}(U)=\dfrac{\sin^2(2c_z)}{5} \approx \dfrac{4c^2_z}{5} \,\thickspace\text{for }\,c_z\ll 1\,.
\end{eqnarray} 
It then follows that 
\begin{eqnarray}
t^{*}\sim \left(\dfrac{3}{7}\right)\dfrac{5}{4c^2_z}\,, 
\end{eqnarray}
implying that the thermalization time scales with the interaction strength as $t\sim c^{-2}_z$. 
It is worth mentioning that when $m_p(U) < \overline{m_p}$, the evolution of the non-stabilizing power remains strictly monotonically increasing. This statement holds even when different non-Clifford unitaries are used in the circuit, provided their respective non-stabilizing powers are smaller than the Haar-averaged value. In contrast, if at least one non-Clifford unitary $U_i$ satisfies $m_p(U_i) > \overline{m_p}$, then the evolution does not remain monotonic.

\subsection{Time steps needed to reproduce $m_p(T)$}
\label{app-F3}

The $ T$-gate is a non-Clifford operation that is a key for universal fault-tolerant quantum computation and also for provable quantum advantage \cite{howard2017application, liu2024classical, PRXQuantum.6.010337, PhysRevLett.116.250501}. Assume one has access to random Clifford operations as well as a unitary gate $U$ with non-zero but potentially small non-stabilizing power. It is then interesting to identify the number of applications of the initial non-Clifford operator needed to reproduce the magic content equivalent to that of the $T$-gate. 

To estimate this number, we consider the average over a sequence of applications of $U$ and random Clifford operations, denoted as in main-text Sec.~\ref{sec-impact-a} by $U^{(t)}=U_{t}C_{t-1}U_{t-1}\cdots C_1U_1$, where here $U_i\equiv U$, $\forall i=1,\dots t$. 
Let us denote with $m_{p}(U)$ and $m_{p}(T)$ the non-stabilizing powers of $U$ and $T$, respectively.
From Eq.~(\ref{expo_magic}), we then obtain 
\begin{eqnarray}
m_{p}(T)= \overline{m_{p}}\left[ 1-\left( 1-\dfrac{m_{p}(U)}{\overline{m_{p}}} \right)^{t} \right].
\end{eqnarray}
This implies that 
\begin{eqnarray}
t=\dfrac{\ln\left( 1-\dfrac{m_p(T)}{\overline{m_{p}}} \right)}{\ln\left( 1-\dfrac{m_p(U)}{\overline{m_{p}}} \right)}\,. 
\end{eqnarray}
In the limit where $m_p(U)\ll{\overline{m_{p}}}$, the number of applications of $U$ needed to reproduce the non-stabilizing power of a single $ T$-gate thus scales linearly with $\frac{\overline{m_{p}}}{m_p(U)}$. 

{
\section{Numerical details}

In this appendix, we briefly outline the details concerning the numerical results presented in the main text, in particular those discussed in Secs.~\ref{sec-impact} and~\ref{sec-quant-chaos}.

\subsection{Non-stabilizing power}
We first describe the procedure used to evaluate the non-stabilizing power of arbitrary quantum evolutions. 
For two-qubit gates, the non-stabilizing power can be evaluated straightforwardly, as discussed in Appendix~\ref{app-b}, by averaging the non-stabilizerness generated when the gate acts on all 60 two-qubit stabilizer states. However, for larger system sizes, evaluating $m_p(U)$ becomes computationally expensive for two main reasons. First, a direct evaluation of $m_p(U)$ involves computing $\mathrm{Tr}(Q U^{\otimes 4} Q \bm{\Pi}^{(4)} U^{\dagger \otimes 4})$, which is generally difficult even for moderately large systems. Second, the number of stabilizer states grows exponentially as $\sim 2^N \prod_{j=1}^{N} (2^j+1)$, rendering full enumeration, similar to the case of two-qubit gates, infeasible.

Hence, instead of computing $m_p(U)$ exactly, we focus on obtaining an approximate estimate. Specifically, we evaluate $m_p(U)$ by sampling over random stabilizer states and averaging the resulting non-stabilizerness after the action of $U$:
\begin{eqnarray}
m_p(U) &=& \left\langle \mathcal{M}(U|\psi\rangle) \right\rangle_{|\psi\rangle \in \mathrm{STAB}(2^N)}\nonumber\\
&=&\left\langle 1-2^N\sum_{j=0}^{2^{2N}-1}\dfrac{1}{2^{2N}}\langle \psi|P_j|\psi\rangle^4 \right\rangle_{|\psi\rangle\in\mathrm{STAB}(2^N)}.
\end{eqnarray}
Here, $\mathcal{M}(|\phi\rangle)$ quantifies the non-stabilizerness of the state $|\phi\rangle$. Random stabilizer states are generated by applying random Clifford operators to the reference state $|0\rangle^{\otimes N}$, using the \href{https://pypi.org/project/qiskit/}{Qiskit} package~\cite{gadi_aleksandrowicz_2019_2562111}.

In our calculations, $\mathcal{M}(|\phi\rangle)$ is evaluated exactly by computing expectation values of all $2^{2N}$ Pauli strings. This approach enables efficient estimation of $m_p(U)$ for small and moderately sized systems. As the operator-space non-stabilizing power (OSNP) of a unitary $U$ is determined by $m_p(U)$, we evaluate the OSNP using the same sampling strategy. For numerical reproducibility, all simulations use a fixed master seed of 12345.

\subsection{Quantum chaos in Floquet circuits}

In this work, quantum chaos is diagnosed via the spectral statistics of the Floquet operator $\mathcal{U}$. Specifically, we compute the average adjacent level-spacing ratio
\begin{equation}
\langle r \rangle = \mathrm{Avg}\left\{ r_i \right\}_{i=1}^{2^N-2}, \qquad 
r_i = \frac{\min(d_i,d_{i+1})}{\max(d_i,d_{i+1})},
\end{equation}
where $d_i = \phi_{i+1}-\phi_i$ denotes the spacing between adjacent eigenphases $\{\phi_i\}$ of $\mathcal{U}$. This quantity provides an unfolding-free probe of spectral correlations. Chaotic dynamics are identified by $\langle r \rangle$ approaching the circular unitary ensemble (CUE) value $\langle r \rangle_{\mathrm{CUE}} = 0.596543$, while regular or integrable dynamics yield the Poisson value $\langle r \rangle_{\mathrm{Poisson}} = 0.386294$. The eigenphases are obtained by exact diagonalization of the single-period Floquet operator. The resulting spacing ratios are then averaged over many independent realizations of the Floquet circuit. 

To further analyze finite-size effects along the Id-SWAP edge, we perform a data-collapse analysis using an appropriate finite-size scaling form. Specifically, the average level-spacing ratio $\langle r \rangle$ is plotted as a function of the scaling variable $(J-J_c)N^{1/\nu}$, where $J$ is the tuning parameter, $J_c$ the critical point, and $\nu$ the correlation-length exponent. The scaling function is defined as
\begin{eqnarray}
f\big(N^{1/\nu} (J-J_c)\big) = N^{-\zeta/\nu}\langle r\rangle,
\end{eqnarray}
where $\langle r \rangle$ is evaluated for system size $N$. In our numerical simulations shown in Figs.~\ref{fig:magivsR}(a) and ~\ref{fig:magivsR1}, data from different system sizes collapse onto a single curve, demonstrating the consistency of this scaling form. 

}

\vspace{1cm}
\twocolumngrid 

\bibliographystyle{myunsrtnat}
\bibliography{references}

\end{document}